\documentclass[11pt,english]{smfart}
\usepackage{graphicx}
\usepackage{hyperref}
\usepackage{epstopdf}
\usepackage{a4wide}
\usepackage{amssymb}
\usepackage{amsfonts}
\usepackage{amsmath}
\usepackage{xcolor}
\usepackage[T1]{fontenc}
\usepackage[latin1]{inputenc}
\usepackage{caption}
\usepackage{subcaption}

\newtheorem{theorem}{Theorem}[section]
\newtheorem{lemma}[theorem]{Lemma}

\newtheorem{definition}[theorem]{Definition}
\newtheorem{remark}[theorem]{Remark}
%%-----------------------------------------------------------------
%%
%%%%%%%%%%%%%%%--BODY--%%%%%%%%%%%%%%%%%%
\def\di{\displaystyle}
\newcommand{\R}{\mathbb{R}}

\newcommand{\tR}{\tilde{R}}
\newcommand{\tT}{\tilde{T}}

\newcommand{\bR}{\bar{R}}
\newcommand{\bT}{\bar{T}}

\newcommand{\tAx}{\tilde{A}_x}
\newcommand{\tAy}{\tilde{A}_y}

\begin{document}

%%-----------------------------------------------------------------
%%      the top matter
%%

\setcounter{tocdepth}{3}

\title{The Sharma-Parthasarathy stochastic two-body problem}
\author{J. Cresson$^{1,2}$, F. Pierret$^2$, B. Puig$^3$}
\address{$1$ LMAP/Universit\'e de Pau, 64013 Pau, France, $2$ SYRTE/Observatoire de Paris, 75014 Paris, France, $3$ IPRA/Universit\'e de Pau, 64013 Pau.}

\begin{abstract}
We study the Sharma-Parthasarathy stochastic two-body problem introduced by N. Sharma and H. Parthasarathy in \cite{sharma}. In particular, we focus on the preservation of some fundamental features of the classical two-body problem like the Hamiltonian structure and first integrals in the stochastic case. Numerical simulations are performed which illustrate the dynamical behaviour of the osculating elements as the semi-major axis, the eccentricity and the pericenter. We also derive a stochastic version of Gauss's equations in the planar case.
\end{abstract}

\maketitle

{\bf Keywords}: Two-body problem, stochastic perturbation, numerical simulations, stochastic Gauss's equations.

%%-----------------------------------------------------------------

%\tableofcontents

\section{Introduction}
%%---------------------

The aim of this paper is to study a stochastic perturbation of the two-body problem introduced by S.N. Sharma and H. Parthasarathy in \cite{sharma} both theoretically and numerically. The perturbation constructed by Sharma and al. is designed to model the force induced by a cloud having a density which fluctuates stochastically. This assumption is supported by observations made by \cite{manndust} about the zodiacal dust around the sun. It must be noted that other examples of stochastic perturbations of the two-body problem can be found in the literature as for example in  (\cite{albeverio1983,albeverio1984,nottale1995,nottale1996,Cr,zambrini,CD1,nelson1966,nelson2001,Mumford99thedawning}) which is not an exhaustive list. However, they do not consider the situation covered by this model.\\

The paper of S.N. Sharma and H. Parthasarathy \cite{sharma} is mainly concerned with constructing a tractable simplified model of the stochastic equations which accurately reproduces the behaviour of the orbiting particle. The classical linearisation procedure around the mean behaviour is in this case ineffective. As a consequence, they develop a second order approximation of the nonlinearity and study the properties of such an approximation.\\ 

In this paper, we return to the initial stochastic model in order to understand what are the main differences with respect to the classical features of the two-body problem as for examples the Hamiltonian structure and conserved quantities. Our results are supported by numerical simulations which are obtained using a specific stochastic Runge-Kutta introduced by . Finally, we derive stochastic equations for the behaviour of the orbital elements. As pointed out in \cite{sharma}, these quantities are fundamental for an accurate positioning of the orbiting particle. We give also some numerical simulation illustrating the resulting behaviour of the orbital elements.\\ 

The plan of the paper is as follows : In Section \ref{sharma-model} we define the Sharma-Parthasarathy stochastic two-body problem following \cite{sharma}. Section \ref{hamiltonian} discuss the preservation of the Hamiltonian structure in the stochastic case and Section \ref{integrals} deals with the behaviour of first integrals. Using a specific numerical method, we perform simulations in Section \ref{simul}. In Section \ref{stochastic-gauss}, we derive the stochastic Gauss equations, i.e. the equations governing the behaviour of the orbital elements and we perform numerical simulations. Finally, Section \ref{conclusion} gives our conclusion and perspectives. 
  
\section{The Sharma-Parthasarathy stochastic two-body problem}
\label{sharma-model}  
\subsection{Reminder about stochastic differential equations}

We remind basic properties and definition of stochastic differential equations in the sense of It\^o. We refer to the book \cite{oksendal2003stochastic} for more details.\\

A {\it stochastic differential equation} is formally written (see \cite{oksendal2003stochastic},Chap.V) in differential form as  
\begin{equation}
\label{stocequa}
dX_t = \mu (t,X_t)dt+\sigma(t,X_t)dB_t ,
\end{equation}
which corresponds to the stochastic integral equation
\begin{equation}
\label{stocintegral}
X_t=X_0+\int_0^t \mu (s,X_s)\,ds+\int_0^t \sigma (s,X_s)\,dB_s ,
\end{equation}
where the second integral is an It\^o integral (see \cite{oksendal2003stochastic},Chap.III) and $B_t$ is the classical Brownian motion (see \cite{oksendal2003stochastic},Chap.II,p.7-8).\\

An important tool to study solutions to stochastic differential equations is the {\it multi-dimensional It\^o formula} (see \cite{oksendal2003stochastic},Chap.III,Theorem 4.6) which is stated as follows : \\

We denote a vector of It\^o processes by $\mathbf{X}_t^\mathsf{T} = (X_{t,1}, X_{t,2}, \ldots, X_{t,n})$ and we put $\mathbf{B}_t^\mathsf{T} = (B_{t,1}, B_{t,2}, \ldots, B_{t,n})$to be a $n$-dimensional Brownian motion (see \cite{karatzas},Definition 5.1,p.72),  $d\mathbf{B}_t^\mathsf{T} = (dB_{t,1}, dB_{t,2}, \ldots, dB_{t,n})$. We consider the multi-dimensional stochastic differential equation defined by (\ref{stocequa}). Let $f$ be a $\mathcal{C}^2(\mathbb{R}_+ \times \mathbb{R},\mathbb{R})$-function and $X_t$ a solution of the stochastic differential equation (\ref{stocequa}). We have 
\begin{eqnarray}
df(t,\mathbf{X}_t) = \frac{\partial f}{\partial t} dt + (\nabla_\mathbf{X}^{\mathsf T} f) d\mathbf{X}_t + \frac{1}{2} (d\mathbf{X}_t^\mathsf{T}) (\nabla_\mathbf{X}^2 f) d\mathbf{X}_t,
\end{eqnarray}
where $\nabla_\mathbf{X} f = \partial f/\partial \mathbf{X}$ is the gradient of $f$ w.r.t. $X$, $\nabla_\mathbf{X}^2 f = \nabla_\mathbf{X}\nabla_\mathbf{X}^\mathsf{T} f$ is the Hessian matrix of $f$ w.r.t. $\mathbf{X}$, $\delta$ is the Kronecker symbol and the following rules of computation are used : $dt dt = 0$, $dt dB_{t,i}  = 0$, $dB_{t,i} dB_{t,j} = \delta_{ij} dt$.
  
\subsection{The Sharma-Parthasarathy stochastic two-body problem}
%%-------------------------
\label{two-body-intro}

In \cite{sharma} the authors consider a stochastic perturbation of the two-body problem induced by a cloud with a density which fluctuates stochastically. This assumption is supported by observations made by \cite{manndust} about the zodiacal dust around the Sun.\\ 

Let $S$ and $P$ be two bodies and $M_S$ and $M_P$ their masses. The body $S$ is supposed to be the central body typically a star and $P$ is the orbiting body typically a planet or a satellite. The motion is supposed to be in an elliptic configuration. The reduced mass is $m=\frac{M_S M_P}{M_S + M_P}$ and the potential coefficient is $k=G M_S M_P$ where $G$ is the gravitational constant. We define $(S,\vec{x},\vec{y})$ to be a fixed frame attached to $S$ and $\vec{r}$ the position vector of $P$ in this reference frame with $\phi$ his position angle. The elliptical motion is described with the semi-major axis $a$, the eccentricity $e$ and the pericenter angle $\omega$. We associate the polar reference frame $(S,\vec{e_R},\vec{e_T})$ where  $\vec{e_R}^\mathsf{T} = (\cos \phi, \sin \phi )$ and $\vec{e_T}^\mathsf{T} = (-\sin \phi, \cos \phi )$. In this reference frame we have $\vec{r}=r \vec{e_R}$ where $r$ is the norm of the position vector. The motion is illustrated in Fig.~\ref{pierret:fig2body}

\begin{figure}[ht!]
 \centering
 \includegraphics[width=0.32\textwidth,clip]{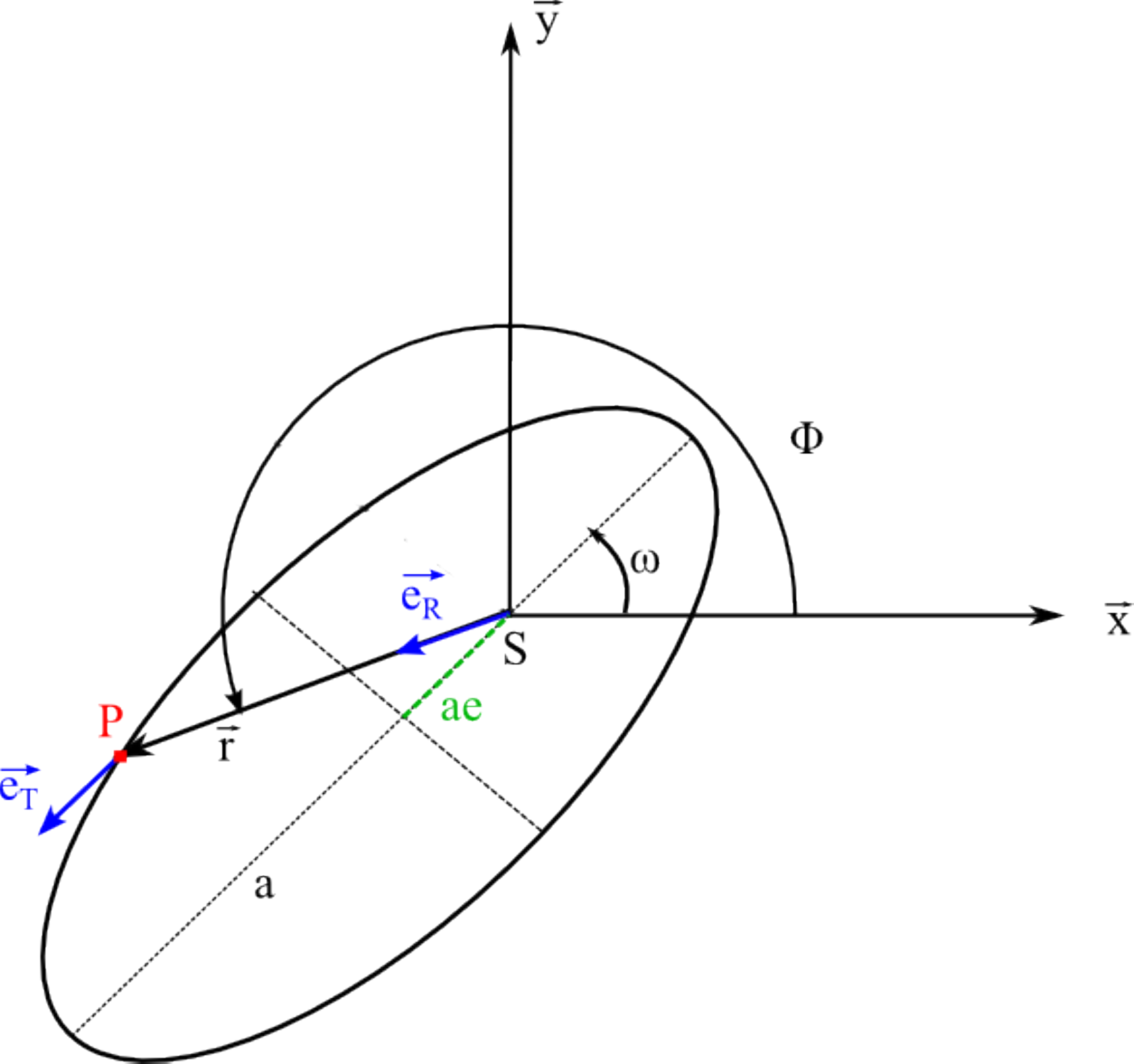}      
%% Note the ABSENCE of the extension .pdf , .eps or .ps  !
  \caption{The classical two body problem.}
  \label{pierret:fig2body}
\end{figure}

The general form of the equations of the perturbed two-body problem by a planar force $\vec{F} = (F_r, F_\phi)$ is easily computed (see \cite{goldstein},Chap.3) and reads
\begin{equation}
\left\{
\begin{array}{r c l}
\frac{dr}{dt} &=& v ,\\
\frac{d\phi}{dt} &=& w, \\
\frac{dv}{dt} &=& rw^2-\frac{k}{m r^2} + \frac{F_r}{m} ,\\
\frac{dw}{dt} &=& -\frac{2 v w}{r} + \frac{F_\phi}{m r} .
\end{array}
\right.
\end{equation}
In \cite{sharma}, the authors take \begin{eqnarray}
\label{randomforce}
\vec{F}^\mathsf{T} = \left( m r \sigma_r W^r_t, m \sigma_\phi W^\phi_t \right)
\end{eqnarray}
where $\sigma_\phi$ is a constant and $W^\phi_t$ is also a "white noise" independent of $W^r_t$ leading to the following stochastic differential system where the {\it white noise process} is heuristically obtained as the "derivative" of the {\it Brownian motion} $B_t$ (see \cite{oksendal2003stochastic},p.7-8) :
\begin{equation}
\label{probleme-2corps-stochastique}
\left\{
\begin{array}{lcl}
d r &=& v dt ,\\
d \phi &=& w dt ,\\
d v &=& \left(rw^2-\frac{k}{m r^2}\right)dt+ r \sigma_r dB^r_t ,\\
d w &=& -\frac{2 v w}{r}dt + \frac{\sigma_\phi}{r} dB^{\phi}_t ,
\end{array}
\right.
\end{equation}
where $B^r_t$ and $B^{\phi}_t$ are independent. This set of equations describes what we called the {\it Sharma-Parthasarathy stochastic two-body problem} in the following.

\section{Hamiltonian structure and first integrals}

In this Section, we discuss the preservation of some fundamental features of the two body problem for the Sharma-Parthasarathy stochastic two-body problem. These information are useful to validate our simulations of the system in the next Section.

\subsection{Hamiltonian structure}
\label{hamiltonian}

Before studying the preservation of the Hamiltonian structure, we remind a stochastic analogue introduced by J-M. Bismut in \cite{bismut1981meca} and called {\it stochastic Hamiltonian systems}.

\subsubsection{Stratonovich stochastic differential equations}

Stochastic Hamiltonian systems are defined in the framework of the Stratonovich interpretation of stochastic differential equations. We refer to \cite{oksendal2003stochastic} for more details.\\

A Stratonovich stochastic differential equation is formally denoted in differential form by
\begin{equation}
\label{strato}
dX_t = \mu (t,X_t ) dt +\sigma (t,X_t ) \circ dB_t ,
\end{equation}
which corresponds to the stochastic integral equation 
\begin{equation}
\label{equa}
X_t =x+\di\int_0^t \mu (s, X_s ) ds + \di\int_0^t \sigma (s, X_s ) \circ dB_t , 
\end{equation}	
where the second integral is a Stratonovich integral (see \cite{oksendal2003stochastic},p.24,2)).\\

Solutions of the Stratonovich differential equation (\ref{strato}) corresponds to the solutions of a modified It\^o equation (see \cite{oksendal2003stochastic},p.36) :
\begin{equation}
\label{ito}
dX_t = \mu_{\rm cor} (t,X_t ) dt +\sigma (t,X_t )  dB_t ,
\end{equation}
where
\begin{equation}
\mu_{\rm cor} (t,x ) =\left [ \mu (t,x ) +\di\frac{1}{2} \sigma' (t,x ) \sigma (t,x) \right ] .
\end{equation}
The correction term $\di\frac{1}{2} \sigma' (t,X_t ) \sigma (t,X_t)$ is also called the {\it Wong-Zakai} correction term (see \cite{stroock}). 

In the multidimensional case, i.e. $\mu :\R^{n+1} \rightarrow \R^n$, $\mu (t,x)=(\mu_1 (t,x) ,\dots ,\mu_n (t,x))$ and $\sigma :\R^{n+1} \rightarrow \R^{n\times p}$, 
$\sigma (t,x)=( \sigma_{i,j} (t,x))_{1\leq i\leq n,\ 1\leq j\leq p}$ the analogue of this formula is given by (see \cite{oksendal2003stochastic},p.85) :
\begin{equation}
\label{strato-ito}
\mu_{\rm cor, i} (t,x)=\mu_i (t,x)+\di\frac{1}{2} \di\sum_{j=1}^p \di\sum_{k=1}^n \di\frac{\partial \sigma_{i,j}}{\partial x_k} \sigma_{k,j},\ \ 1\leq i \leq n .
\end{equation}

The main advantage of the Stratonovich integral is that it induces classical chain rule formulas under a change of variables.

\subsubsection{Reminder about stochastic Hamiltonian systems}
\label{section-hamiltonian}
In the following we deal with stochastic differential equations in the Stratonovich sense. \\

Stochastic Hamiltonian systems are formally defined as :

\begin{definition}
A stochastic differential equation is called stochastic Hamiltonian system if we can find a finite family of functions $\mathbf{H}=\left \{ H_r \right \}_{r=0,\dots ,m}$, $H_r :\R^{2n} \mapsto \R$, $r=0,\dots ,m$ 
such that
\begin{equation}
\left\{
\begin{array}{lll}
dP^i & = & -\di\frac{\partial H}{\partial q_i} dt - \displaystyle\sum_{r=1}^m \frac{\partial H_r}{\partial q_i} (t,P,Q) {\circ} d B^r_t ,\\
dQ^i & = & \di\frac{\partial H}{\partial p_i} dt + \displaystyle\sum_{r=1}^m \frac{\partial H_r}{\partial p_i} (t,P,Q) {\circ} d B^r_t .
\end{array}
\right.
\end{equation}
\end{definition}

We recover the classical algebraic structure of Hamiltonian systems. The main properties supporting this definition are the following one, already proved in Bismut \cite{bismut1981meca} :\\

\begin{itemize}
\item {\it Liouville's property} : Let $(P,Q)\in \R^{2n}$, we consider the stochastic differential equation
\begin{equation}
\label{eq**}
\left\{
\begin{array}{lll}
dP & = & f(t,P,Q)dt + \di\sum_{r=1}^m \sigma_r (t,P,Q) \circ d B^r_t ,\\
dQ & = & g(t,P,Q)dt + \di\sum_{r=1}^m \gamma_r (t,P,Q) \circ d B^r_t .
\end{array}
\right.
\end{equation}
The phase flow of (\ref{eq**}) preserves the {\it symplectic structure} if and only if it is a stochastic Hamiltonian system.	\\

\item {\it Hamilton's principle} : Solutions of a stochastic Hamiltonian system correspond to critical points of a stochastic functional defined by 
\begin{equation}
\mathcal{L}_{\mathbf{H}} (X)=\di \int_0^t H_0(s,X_s ) + \sum_{r=1}^{m} H_r(s,X_s )\circ d B^r_t .
\end{equation}
\end{itemize}

\subsubsection{Is the stochastic two-body problem Hamiltonian ?}

In order to determine if the stochastic two-body problem possess or not a stochastic Hamiltonian structure, we derive the Stratonovich form of our equations.

\begin{theorem}
The Stratonovich form of the stochastic two-body problem is given by  
\begin{equation}
\left\{
\begin{array}{lll}
dr & = &  \frac{p_r}{m} dt , \\
d\phi & = &  \frac{p_\phi}{m r^2}dt ,\\
d p_r & = &  (\frac{p_\phi^2}{m r^3} - \frac{k}{r^2} )dt +  m \sigma_r r \circ d B^r_t , \\
dp_{\phi} & = &  m \sigma_\phi r \circ d B_t^{\phi} .
\label{eq8}
\end{array}
\right.
\end{equation}
\label{thm2}
\end{theorem}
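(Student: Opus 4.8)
The plan is to pass from the Itô system (\ref{probleme-2corps-stochastique}), written in the kinematic variables $(r,\phi,v,w)$, to the canonical momenta, and then to observe that in those variables the It\^o--Stratonovich correction is trivial. Motivated by the Kepler Hamiltonian $H=\frac{p_r^2}{2m}+\frac{p_\phi^2}{2mr^2}-\frac{k}{r}$, whose conjugate momenta are $p_r=m\dot r$ and $p_\phi=mr^2\dot\phi$, I introduce
\[
p_r = mv, \qquad p_\phi = mr^2 w .
\]
The first two lines of (\ref{probleme-2corps-stochastique}) then immediately give $dr=\frac{p_r}{m}\,dt$ and $d\phi=\frac{p_\phi}{mr^2}\,dt$ by substitution.

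Next I would compute $dp_r$ and $dp_\phi$ with the multi-dimensional It\^o formula. As $p_r=mv$ is linear in $v$, there is no second-order contribution and $dp_r=m\,dv$; inserting $dv$ from (\ref{probleme-2corps-stochastique}) and $w=p_\phi/(mr^2)$ turns the drift $m\bigl(rw^2-\tfrac{k}{mr^2}\bigr)$ into $\frac{p_\phi^2}{mr^3}-\frac{k}{r^2}$ and the diffusion into $m\sigma_r r\,dB^r_t$. The genuinely nonlinear step is $p_\phi=mr^2w$. Here It\^o's formula yields the first-order part $2mrw\,dr+mr^2\,dw$ together with second-order terms in $(dr)^2$, $dr\,dw$ and $(dw)^2$. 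Since $dr=v\,dt$ carries no noise, the rules $dt\,dt=dt\,dB=0$ kill $(dr)^2$ and $dr\,dw$, while the surviving quadratic variation $(dw)^2=\frac{\sigma_\phi^2}{r^2}\,dt$ is annihilated by $\partial^2 p_\phi/\partial w^2=0$. Substituting $dr$ and $dw$ then produces the exact cancellation $2mrwv-2mrvw=0$ in the drift, leaving $dp_\phi=m\sigma_\phi r\,dB^\phi_t$ in It\^o form.

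It remains to rewrite this It\^o system in $(r,\phi,p_r,p_\phi)$ in Stratonovich form through the correction formula (\ref{strato-ito}). The key point is that the Wong--Zakai correction $\frac{1}{2}\sum_{j,k}\frac{\partial\sigma_{i,j}}{\partial x_k}\sigma_{k,j}$ vanishes identically: the only nonzero diffusion coefficients are $m\sigma_r r$ (on the $p_r$ line, driven by $B^r_t$) and $m\sigma_\phi r$ (on the $p_\phi$ line, driven by $B^\phi_t$), both of which are functions of $r$ alone, whereas $r$ itself has no diffusion. Hence $\partial\sigma_{i,j}/\partial x_k$ can be nonzero only for $x_k=r$, but then it is multiplied by the diffusion $\sigma_{r,j}$ of the noiseless variable $r$, which is zero. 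Every summand therefore vanishes, the Stratonovich drift coincides with the It\^o drift, and one obtains exactly (\ref{eq8}).

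The main obstacle is the nonlinear change of variables for $p_\phi$: one must apply It\^o's formula with care, keep track of all three second-order terms, and check both that the nonlinear diffusion of $w$ contributes nothing (via $\partial^2_{ww}p_\phi=0$) and that the advective drift terms cancel. Everything else is direct substitution, and the vanishing of the Wong--Zakai correction is immediate once one notices that the diffusion coefficients depend on $r$ only while $r$ is noiseless.
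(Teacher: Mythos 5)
Your argument is correct and follows essentially the same route as the paper, which simply invokes formula (\ref{strato-ito}) to conclude that the Wong--Zakai correction vanishes; you additionally spell out the change of variables to $(p_r,p_\phi)$ via It\^o's formula, which the paper leaves implicit. The key observation in both cases is identical: the diffusion coefficients depend only on $r$, and $r$ itself is noiseless, so every term in the correction sum is zero.
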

		
\begin{proof}
Using formula (\ref{strato-ito}), we easily prove that the Wong-Zakai correction term is zero. 
\end{proof}
		
As a consequence, the Itô and Stratonovich formulations coincide for this model.\\

We are now in position to answer our question about the persistence of the Hamiltonian structure under the stochastic perturbation. We use the following characterization of stochastic Hamiltonian systems due to Milstein and al. \cite{milstein} :

\begin{theorem}
A $2n$-system of stochastic differential equations of the form 
\begin{equation}
\left\{
\begin{array}{lll}
dP & = &  f(t,P,Q)dt + \di\sum_{r=1}^m \sigma_r (t,P,Q) \circ d B^r_t , \\
dQ & = &  g(t,P,Q)dt + \di\sum_{r=1}^m \gamma_r (t,P,Q) \circ d B^r_t ,
\end{array}
\right.
\label{eq*}
\end{equation}
possesses a stochastic Hamiltonian formulation if and only if the coefficients satisfy the following set of conditions
\begin{equation}
\label{condition-hamiltonian}
\begin{array}{lll}
\di\frac{\partial \sigma_{ir}}{\partial p^\alpha} + \di\frac{\partial \gamma_{\alpha r}}{\partial q^i} & = &  0 , \\
\di\frac{\partial \sigma_{ir}}{\partial q^\alpha} & = &  \di\frac{\partial \sigma_{\alpha r}}{\partial q^i},\ \alpha \neq i ,\\
\di\frac{\partial \gamma_{ir}}{\partial p^\alpha} & = &  \di\frac{\partial \gamma_{r\alpha}}{\partial p^i},\ \alpha \neq i ,
\end{array}
\end{equation}
for $i,\alpha=1,...n$.
\label{thmilsteincond}
\end{theorem}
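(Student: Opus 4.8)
The plan is to recognize conditions (\ref{condition-hamiltonian}) as exactly the closedness (integrability) conditions that, via the Poincar\'e lemma, guarantee the existence of the Hamiltonian functions $H_r$ appearing in the definition of a stochastic Hamiltonian system. Indeed, the system (\ref{eq*}) admits a stochastic Hamiltonian formulation precisely when, for each noise index $r=1,\dots,m$, there is a function $H_r:\R^{2n}\to\R$ with $\sigma_{ir}=-\partial H_r/\partial q^i$ and $\gamma_{ir}=\partial H_r/\partial p^i$ for every $i$. The entire statement therefore reduces, one noise term at a time, to the classical problem of deciding when a prescribed covector field is a genuine gradient.

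For the forward implication I would assume the Hamiltonian formulation and substitute $\sigma_{ir}=-\partial H_r/\partial q^i$ and $\gamma_{ir}=\partial H_r/\partial p^i$ directly into the three relations. Each then collapses to an equality of mixed second-order partials of $H_r$, so Schwarz's theorem (valid since $H_r\in\mathcal{C}^2$) makes them automatic: the first relation becomes $-\partial^2 H_r/\partial p^\alpha\partial q^i+\partial^2 H_r/\partial q^i\partial p^\alpha=0$, while the second and third express the symmetry of the Hessian blocks in the $q$ and $p$ variables, respectively.

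For the converse I would fix $r$ and introduce the one-form $\omega_r=-\sum_i\sigma_{ir}\,dq^i+\sum_i\gamma_{ir}\,dp^i$ on phase space. A short computation of $d\omega_r$ shows that its vanishing is equivalent to the three conditions: the $dq^\alpha\wedge dq^i$ part yields the second condition, the $dp^\alpha\wedge dp^i$ part the third, and the mixed $dq^i\wedge dp^\alpha$ part the first. Since the phase space is contractible, the Poincar\'e lemma then produces an $H_r$ with $dH_r=\omega_r$, which is exactly the sought Hamiltonian; repeating over $r$ assembles the whole family $\{H_r\}$.

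The principal difficulty is organizational rather than analytic: one must correctly expand $d\omega_r$ and match its three classes of wedge components to the stated relations, treating the noise index $r$ throughout as a passive label. (Note that the third condition should read $\partial\gamma_{ir}/\partial p^\alpha=\partial\gamma_{\alpha r}/\partial p^i$; the index ordering printed in (\ref{condition-hamiltonian}) is a typo.) A secondary remark is that these conditions constrain only the diffusion coefficients $\sigma,\gamma$ and not the drift $f,g$; the drift Hamiltonian $H_0$ is recovered by the identical closedness argument applied to the one-form $-\sum_i f_i\,dq^i+\sum_i g_i\,dp^i$, so no new machinery is required.
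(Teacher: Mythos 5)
The paper offers no proof of this statement: Theorem \ref{thmilsteincond} is imported verbatim from Milstein, Repin and Tretyakov, and the authors use it only as a black box to test system (\ref{probleme-2corps-stochastique}). Your argument is the standard one behind that cited result, and it is correct as far as the diffusion coefficients are concerned: substituting $\sigma_{ir}=-\partial H_r/\partial q^i$, $\gamma_{ir}=\partial H_r/\partial p^i$ reduces the three relations to Schwarz's theorem, and conversely the relations are exactly $d\omega_r=0$ for $\omega_r=-\sum_i\sigma_{ir}\,dq^i+\sum_i\gamma_{ir}\,dp^i$, so the Poincar\'e lemma on the (contractible) phase space produces each $H_r$. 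Your identification of the wedge components with the three conditions checks out, and you are right that $\gamma_{r\alpha}$ in the third condition is a misprint for $\gamma_{\alpha r}$.

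The one substantive gap is the drift. The definition of a stochastic Hamiltonian system in Section \ref{section-hamiltonian} requires $f=-\partial H_0/\partial q$ and $g=\partial H_0/\partial p$ as well, yet the conditions (\ref{condition-hamiltonian}) constrain only $\sigma$ and $\gamma$. Your converse therefore cannot produce $H_0$ from the stated hypotheses: the ``identical closedness argument applied to $-\sum_i f_i\,dq^i+\sum_i g_i\,dp^i$'' needs the analogous (unstated) relations on $f$ and $g$, which do not follow from anything you have assumed. As written, the ``if'' direction of the theorem is false (take $f,g$ non-Hamiltonian and $\sigma=\gamma=0$). This defect is inherited from the paper's own statement rather than introduced by you, but a complete proof must either add the drift conditions to (\ref{condition-hamiltonian}) or make explicit the convention that the deterministic part is assumed Hamiltonian from the outset; you should say which, rather than asserting that ``no new machinery is required.''
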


Simple computations lead to :
		
\begin{theorem}
The stochastic two-body problem does not possess a stochastic Hamiltonian formulation.
\label{thm3}
\end{theorem}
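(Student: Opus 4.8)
The plan is to apply the Milstein characterization of Theorem~\ref{thmilsteincond} directly to the Stratonovich system \eqref{eq8} of Theorem~\ref{thm2}, so that the entire argument reduces to reading off the diffusion coefficients and testing the three sets of compatibility conditions \eqref{condition-hamiltonian}. First I would fix the canonical pairing suggested by the notation of \eqref{eq8}: the configuration variables are $Q = (q^1,q^2) = (r,\phi)$ and the conjugate momenta are $P = (p^1,p^2) = (p_r,p_\phi)$, with two independent noises which I relabel $\rho = 1$ for $B^r_t$ and $\rho = 2$ for $B^\phi_t$ (so $m = 2 = n$). This pairing is the natural one, since the drift of \eqref{eq8} is exactly Hamilton's equations for $H = \frac{p_r^2}{2m} + \frac{p_\phi^2}{2mr^2} - \frac{k}{r}$.

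With this labelling the drift is irrelevant to the test, and the diffusion matrices $\sigma = (\sigma_{i\rho})$ (entering the $dP$ equations) and $\gamma = (\gamma_{i\rho})$ (entering the $dQ$ equations) are
\[
\sigma_{1,1} = m\sigma_r r, \quad \sigma_{2,2} = m\sigma_\phi r, \quad \sigma_{1,2} = \sigma_{2,1} = 0, \qquad \gamma_{i\rho} \equiv 0 ,
\]
the last identity holding because the two position equations $dr = \frac{p_r}{m}\,dt$ and $d\phi = \frac{p_\phi}{mr^2}\,dt$ carry no stochastic term.

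Next I would run through the three conditions of \eqref{condition-hamiltonian}. The third set is satisfied trivially since $\gamma \equiv 0$. The first set also holds: every nonzero entry of $\sigma$ depends only on the coordinate $r = q^1$ and on no momentum, whence $\partial \sigma_{i\rho}/\partial p^\alpha = 0$, while $\partial \gamma_{\alpha\rho}/\partial q^i = 0$ as well. The decisive test is the second set, $\partial \sigma_{i\rho}/\partial q^\alpha = \partial \sigma_{\alpha\rho}/\partial q^i$ for $\alpha \neq i$. For the $r$-noise $\rho = 1$ both sides vanish, but for the $\phi$-noise $\rho = 2$ and the pair $(i,\alpha) = (2,1)$ one finds
\[
\frac{\partial \sigma_{2,2}}{\partial q^1} = \frac{\partial (m\sigma_\phi r)}{\partial r} = m\sigma_\phi \neq 0 , \qquad \frac{\partial \sigma_{1,2}}{\partial q^2} = 0 ,
\]
so the required equality fails whenever $\sigma_\phi \neq 0$. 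By Theorem~\ref{thmilsteincond} the system therefore admits no stochastic Hamiltonian formulation, which is the assertion.

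There is essentially no analytic obstacle here: the computation is immediate once the coefficients are correctly identified. The only point deserving care is the bookkeeping — one must not confuse the radial coordinate $r$ with the summation index $r$ labelling the noises in \eqref{condition-hamiltonian} (this is why I switch to $\rho$), and one must correctly place the $r$-dependent factor $m\sigma_\phi r$ in the \emph{angular}-momentum equation while the two position equations carry no noise at all. It is precisely this asymmetry, namely an $r$-dependent diffusion coefficient sitting in the $p_\phi$ slot with no compensating contribution in the $\phi$ equation, that destroys the symplecticity condition.
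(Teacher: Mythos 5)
Your proof is correct and follows essentially the same route as the paper: both apply the Milstein characterization (Theorem~\ref{thmilsteincond}) to the Stratonovich/canonical form \eqref{eq8}, observe that the conditions involving $\gamma$ and the momentum-derivatives of $\sigma$ hold trivially, and show that the remaining symmetry condition forces $m\sigma_\phi = \partial(m\sigma_\phi r)/\partial r = \partial \sigma_{12}/\partial \phi = 0$, contradicting $\sigma_\phi \neq 0$. Your bookkeeping (relabelling the noise index and making the canonical pairing explicit) is in fact slightly cleaner than the paper's, but the argument is identical.
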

		
\begin{proof}
We use Theorem \ref{thmilsteincond} for the system (\ref{probleme-2corps-stochastique}). As $\gamma$ is null and $\sigma = \left( \begin{matrix} m \sigma_r r & 0 \\ 0 & m \sigma_\phi r \end{matrix} \right) $ does not depends on the conjugate variables $v$ and $w$, the first two conditions of (\ref{condition-hamiltonian}) are trivially satisfied. 

The last condition is equivalent to $\frac{\partial \sigma_{11}}{\partial \phi} = \frac{\partial \sigma_{21}}{\partial r}$ and $\frac{\partial \sigma_{22}}{\partial r} = \frac{\partial \sigma_{12}}{\partial \phi}$. The first equation is satisfied and the second one reduces to 
\begin{equation}
m \sigma_\phi = 0,
\end{equation}
which is satisfied if and only if $\sigma_\phi =0$, i.e. there is no tangential component to the noise, which is not allowed in our model. 
\end{proof}

\subsection{Symmetries and First integrals}
\label{integrals}
First integrals and symmetries play a fundamental role in classical mechanics and in particular for the study of the deterministic $n$-body problem (see \cite{arnold}). A natural question is to know if symmetries and first integrals of a given deterministic system persist in an appropriate sense. In this Section, we remind the definition of weak and strong first integrals as introduced for example by M. Thieullen and J.C. Zambrini (\cite{tz},\cite{thieullen-zambrini1997} or \cite{misawa},\cite{cami},\cite{CD1},\cite{bismut1981meca}). We prove that the angular momentum is preserved under stochastic perturbation and give rise to a weak first integral of the stochastic two-body problem. 

\subsubsection{Definitions}

Let $dx/dt = f(x,t)$, $x\in {\mathbb R}^n$ ($\star$) be an ordinary differential equation. A function $I :{\mathbb R}^n \mapsto {\mathbb R}$ is called a {\it first integral} of ($\star$) if for all solutions $x_t$ of ($\star$) we have $I (x_t ) = I(x_0 )$ for all $t$.  If $I$ is sufficiently smooth we deduce $\frac{dI(x_t)}{dt}=0$.\\

A natural generalisation of this definition in the setting of stochastic differential equations is given for example in \cite{misawa} (see also \cite{tz,thieullen-zambrini1997, bismut1981meca},\cite{CD1},\cite{CD2} and \cite{cami},p.52):

\begin{definition}[Strong first integral]
A function $I :{\mathbb R}^n \rightarrow {\mathbb R}$ is a {\it strong first integral} of (\ref{stocequa}) if for all solutions $X_t$ of (\ref{stocequa}), the stochastic process $I(X_t )$ is a constant process, i.e. $I(X_t ) = I(X_0 )$ a.s. (almost surely) or $d (I(X_t )) =0$. 
\end{definition}

Such a property is very strong and classical first integral are usually not preserved in the strong sense. However, a weaker property can be looked for:

\begin{definition}[Weak stochastic first integral]
A function $I :{\mathbb R}^n \rightarrow {\mathbb R}$ is a weak stochastic first integral of (\ref{stocequa}) if for all solutions $X_t$ of (\ref{stocequa}), the stochastic process $I(X_t )$ satisfies $\mbox{\rm E}(I(X_t))=\mbox{\rm E}(I(X_0 ))$ where $\mbox{\rm E}$ denotes the expectation.
\end{definition}

Of course strong first integrals are also weak first integrals as the equality $I(X_t ) =I(X_0 )$ a.s. implies that $\mbox{\rm E}(I(X_t))=\mbox{\rm E}(I(X_0 ))$. 

\subsubsection{Variation of the angular momentum and the energy}

Classical conserved quantities of motion for the two-body problem are the {\it angular momentum} and {\it energy} of the system defined by  
\begin{eqnarray}
M=mr^2 w ,\label{angular} \\
H= \frac{1}{2}m(v^2+r^2w^2) - \frac{k}{r} .\label{energy}
\end{eqnarray}

Using formulas (\ref{angular}) and (\ref{energy}) for the angular momentum and energy, the multi-dimensional It\^o formula with $X_t^\mathsf{T}=\left(r,\phi,v,w \right)$ and $B_t^\mathsf{T}=\left(B^r_t ,B^{\phi}_t \right)$ leads to 	
$$dM(X_t) = m r \sigma_\phi dB^{\phi}_t ,\ \ 
	dH(X_t) = m r v \sigma_r dB^r_t + m r w \sigma_\phi dB^{\phi}_t + \frac{m}{2}\left[\sigma_r^2 r^2 + \sigma_\phi^2\right]dt .
$$
for the behaviour of these first integrals over solutions of the stochastic two-body problem. As expected, there is no persistence of the angular momentum or energy integral in the strong sense. 

\begin{remark}
The strong conservation of the angular momentum is broken by our assumption that an isotropic tangential force exists, i.e. $\sigma_{\phi} \not= 0$. 
\end{remark}

However, we have the following weak conservation property :

\begin{lemma}
The angular momentum is a weak first integral of the stochastic two-body problem.
\end{lemma}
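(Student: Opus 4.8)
The plan is to show that the expectation of the angular momentum $M = mr^2 w$ is constant along solutions of the stochastic two-body problem. I already have from the multi-dimensional It\^o formula the stochastic differential
\begin{equation}
dM(X_t) = m r \sigma_\phi \, dB^{\phi}_t ,
\end{equation}
so the key observation is that the $dt$-drift term vanishes identically: the only contribution to $dM$ is a pure It\^o stochastic integral against $B^\phi_t$. The entire argument reduces to recalling that It\^o integrals have zero expectation.

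Concretely, I would integrate the above differential from $0$ to $t$ to obtain
\begin{equation}
M(X_t) = M(X_0) + \int_0^t m r \sigma_\phi \, dB^{\phi}_s ,
\end{equation}
and then take expectations on both sides. The definition of a weak stochastic first integral requires exactly $\mbox{\rm E}(M(X_t)) = \mbox{\rm E}(M(X_0))$, so it suffices to show that the expectation of the It\^o integral is zero. This is the standard martingale property of the It\^o integral: provided the integrand $m r \sigma_\phi$ is adapted and satisfies the integrability condition $\mbox{\rm E}\bigl(\int_0^t (m r \sigma_\phi)^2 \, ds\bigr) < \infty$, the process $\int_0^t m r \sigma_\phi \, dB^{\phi}_s$ is a martingale started at $0$, hence has zero expectation for every $t$.

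The main (and really only) obstacle is the integrability hypothesis needed to guarantee that the It\^o integral is a genuine martingale rather than merely a local martingale. Since $r = r(X_t)$ is itself a solution component and is not a priori bounded, one must either invoke the assumption that solutions are sufficiently regular (e.g. that $r$ remains square-integrable on $[0,t]$, which is natural for the bounded elliptic configuration considered here) or argue via a localization sequence of stopping times. For the elliptic regime under study this integrability is unproblematic, so I would simply state that the integrand is square-integrable and conclude that the It\^o integral has vanishing expectation, which yields
\begin{equation}
\mbox{\rm E}(M(X_t)) = \mbox{\rm E}(M(X_0)) ,
\end{equation}
establishing that the angular momentum is a weak first integral of the stochastic two-body problem.
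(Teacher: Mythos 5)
Your proposal is correct and follows essentially the same route as the paper: apply the It\^o formula to $M=mr^2w$ to see that the drift term vanishes, integrate to get $M(X_t)=M(X_0)+\int_0^t mr\sigma_\phi\,dB^\phi_s$, and conclude by the vanishing expectation of the It\^o integral. Your extra remark on the square-integrability of the integrand is a point the paper hides behind the phrase ``for all $f$ sufficiently smooth,'' so it is a welcome clarification but not a different argument.
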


The proof is simple and relies on classical properties of the Brownian motion. 

\begin{proof}
Let $X_t$ be a solution of the stochastic two-body problem. We have $M(X_t) = M(X_0) + \int_{0}^{t} m r\sigma_\phi dB^{\phi}_t$ where $M$ is the angular momentum function. Using the property that $\mbox{\rm E}\left(\int_a^b f dB\right)=0$ for all $f$ sufficiently smooth (see \cite{oksendal2003stochastic},Definition 3.4,p.18 and Theorem 3.7 (iii),p.22), we deduce that $\mbox{\rm E}(M(X_t)) = \mbox{\rm E}(M(X_0))$ which concludes the proof. 
\end{proof}

This result does not extend to the energy first integral. This is due to the existence of a non-trivial deterministic term emerging in the It\^o formula. Precisely, we have $H(X_t) = H(X_0) + \int_{0}^{t} m rv\sigma_r dB^r_t + \int_{0}^{t} mrw\sigma_\phi dB^{\phi}_t + \frac{m}{2} \int_{0}^{t} \left[\sigma_r^2 r^2 + \sigma_\phi^2 \right]ds$. Taking expectation, we obtain 
$$\mbox{\rm E}(H(X_t)) = \mbox{\rm E}(H(X_0 )) + \frac{m}{2} \mbox{\rm E}\left(\int_{0}^{t} \left[\sigma_r^2 r^2 + \sigma_\phi^2 \right]ds \right) .$$
The second term is non zero so that the energy first integral is not preserved even in a weak sense.

\begin{remark}
The conservation of the angular momentum in the weak sense will be an important information in order to perform simulations because it will be the only quantity that we could check his conservation during the simulations.
\end{remark}

\section{Simulations}
\label{simul}
The simulation of stochastic differential equations is more difficult than in the deterministic case (see \cite{kloeden} and \cite{higham}). In the sequel, we use a stochastic Runge-Kutta of weak order 2 due to N.J. Kasdin and L.J. Stankievech in \cite{kasdin}. The term of {\it weak order} refers to the error of the stochastic numerical scheme with respect to the expectation of the solution computed.

\subsection{A stochastic Runge-Kutta method of weak order 2}

\subsubsection{Kasdin and al. stochastic Runge-Kutta method}

The numerical scheme of N.J. Kasdin and L.J. Stankievech in \cite{kasdin} is based on the strategy of construction of Runge-Kutta type methods in the deterministic case which used the Taylor expansion of function in order to determine the coefficient of the scheme. Using the {\it Ito-Taylor expansion} (see \cite{kloeden},Theorem 5.5.1, p.181-182), one can construct in the same way such methods for stochastic differential equations. 

The main difference to construct a Runge-Kutta method of $n$ in the stochastic case versus the deterministic case is the lake of constraining equations to determine coefficients of the method due to the existence of multiple paths. 

The strategy used to bypass this difficulty is to weaken the method in sense of stochastic calculus that is to say to consider only quantities in term of expectation. It reduces considerably the under determined system for the coefficients and improves the development of algorithms with high order in the weak sense. \\

The weak second order method of N.J. Kasdin and L.J. Stankievech in \cite{kasdin} is described as follows :

\begin{align}
x_{n+1} &= x_n + \sum_{l=1}^{2} \alpha_l k_l + \beta_l j_l ,\\
k_1&=h f(x_n,t_n) ,\\
j_1&=g(x_n,t_n)w_1 ,\\
k_2&=h f(x_n+a_{21}k_1+b_{21}j_1,t_n + c_2h) ,\\
j_2&=g(x_n+e_{21}k_1+g_{21}j_1,t_n + d_2h)w_2 ,
\end{align}
where $x=(x_0,...,x_N)$ is the numerical solution with time step $h$ of the time interval $[0,T]$ defined for $i=0,...,N-1$ by $t_{i+1}-t_i=h$ and $w_1,w_2$ are independently and identically distributed Gaussian random numbers such that 

\begin{align}
E(w_l)&=0 ,\\
E(w_l w_m)&=q_l Q h \delta_{lm} ,
\end{align}
where the $q_l$ are additional coefficients defining the variance of each noise sample ,$Q$ is a constant defining the variance of the increments of Brownian motion and $\delta$ is the Kronecker delta function.

N.J. Kasdin and L.J. Stankievech in \cite{kasdin} have two set of coefficients for the method one obtain numerically and the other as a Heun Analog of the deterministic Heun method which allow to reduce the stochastic method to the deterministic method if there is no stochastic perturbation :

\begin{center}
\begin{tabular}{|c|c|c|}
\hline Coefficients & Heun Analog & Numerical Search \\ 
\hline $\alpha_1$ & 1/4 & 0.136713 \\
\hline $\alpha_2$ & 3/4 & 0.863287 \\ 
\hline $\beta_1$ & 1 & -1.512997 \\ 
\hline $\beta_2$ & 1 & 1.112094 \\ 
\hline $c_2$ & 2/3 & 0.579182 \\ 
\hline $d_2 $ & 3/2 & 1.18816  \\ 
\hline $a_{21}$ & 2/3 & 0.579182  \\ 
\hline $b_{21}$ & 1 & -1.512997 \\ 
\hline $e_{21}$ & 3/2 & 1.18816 \\ 
\hline $g_{21}$ & 3/2 & 2.16704 \\ 
\hline $q_1$ & 2/3 & 0.25301 \\ 
\hline $q_2$ & 1/3 & 0.34026 \\ 
\hline 
\end{tabular} 
\end{center}

\subsubsection{Implementation of the method}

The method is implemented in a Fortran program with the coefficients determined by the numerical search because as N.J. Kasdin and L.J. Stankievech in \cite{kasdin} pointed out it works better than the Heun Analog coefficients. In order to compute expectation quantities, several millions of Brownian realization are needed to more accuracy with the Monte Carlo methods. Our program uses parallel distribution of Brownian realization on cluster which considerably reduce the amount of time needed to compute expectation with high accuracy.\\ 

The simulations are performed on a cluster with 56 processor unit.
The cluster is composed of an Intel Xeon CPU E5649, an Intel Xeon X5570, an
Intel Core CPU i7-3720QM and an Intel Core CPU i7-2600K.
It allows us to perform simulations for the test of the weak convergence in
few seconds with 50 000 Brownian motion realization.
For the numerical simulations of the two-body problem we can perform $5.10^6$
Brownian motion realization with a time step $dt=0.01$ and a final time $T=15$ in
about $5$ minutes only.\\

A version of this program in Fortran and Scilab can be downloaded on the web-page of Fr\'ed\'eric {\sc Pierret}. See \begin{verbatim}
http://syrte.obspm.fr/~pierret/two_body_sto.html
\end{verbatim}

\subsubsection{About the weak convergence of the method}

In this Section, we provide numerical results indicating that the Kasdin and al. RK method converge in a weak sense. Our test is done on the classical Orstein-Ulenbeck model for which explicit solutions are known. \\

First we remind the definition of the {\it weak convergence} (see \cite{kloeden},p.326-327 or D.J. Higham \cite{higham},p.537) :

\begin{definition}
A method is said to have a weak order of convergence equal to $\gamma$ if there exists a constant $C$ such that
\begin{align}
|{\rm E}(x_n)-{\rm E}(x(\tau))| \le C h^\gamma ,
\end{align}
where $h$ is the time step, $\tau=n h \in [0,T]$ a fixed point and $x_n$ the numerical solution at time $t_n=nh$.
\end{definition}

Now we consider the {\it Langevin equation}
\begin{equation}
dX_t = - \mu X_t dt + \sigma dB_t ,
\end{equation}
with $\mu,\sigma \in \R$.\\

The expectation and variance are explicitly known for the solution of this equation :

\begin{align}
{\rm E}(X_t) &= X_0 e^{-\mu t} ,\\
Var(X_t) & \equiv {\rm E}(X_t^2) = X_0^2 e^{-2\mu t} + \frac{\sigma^2}{2\mu}(1-e^{-2\mu t}).
\end{align}

The simulations for different values of $\sigma$ show that the numerical method converge in the weak sense. See Figures \ref{analytic-test-0001} and \ref{analytic-test-00001}.

\begin{figure}[ht!]
%\centering
\includegraphics[width=0.5\textwidth,clip]{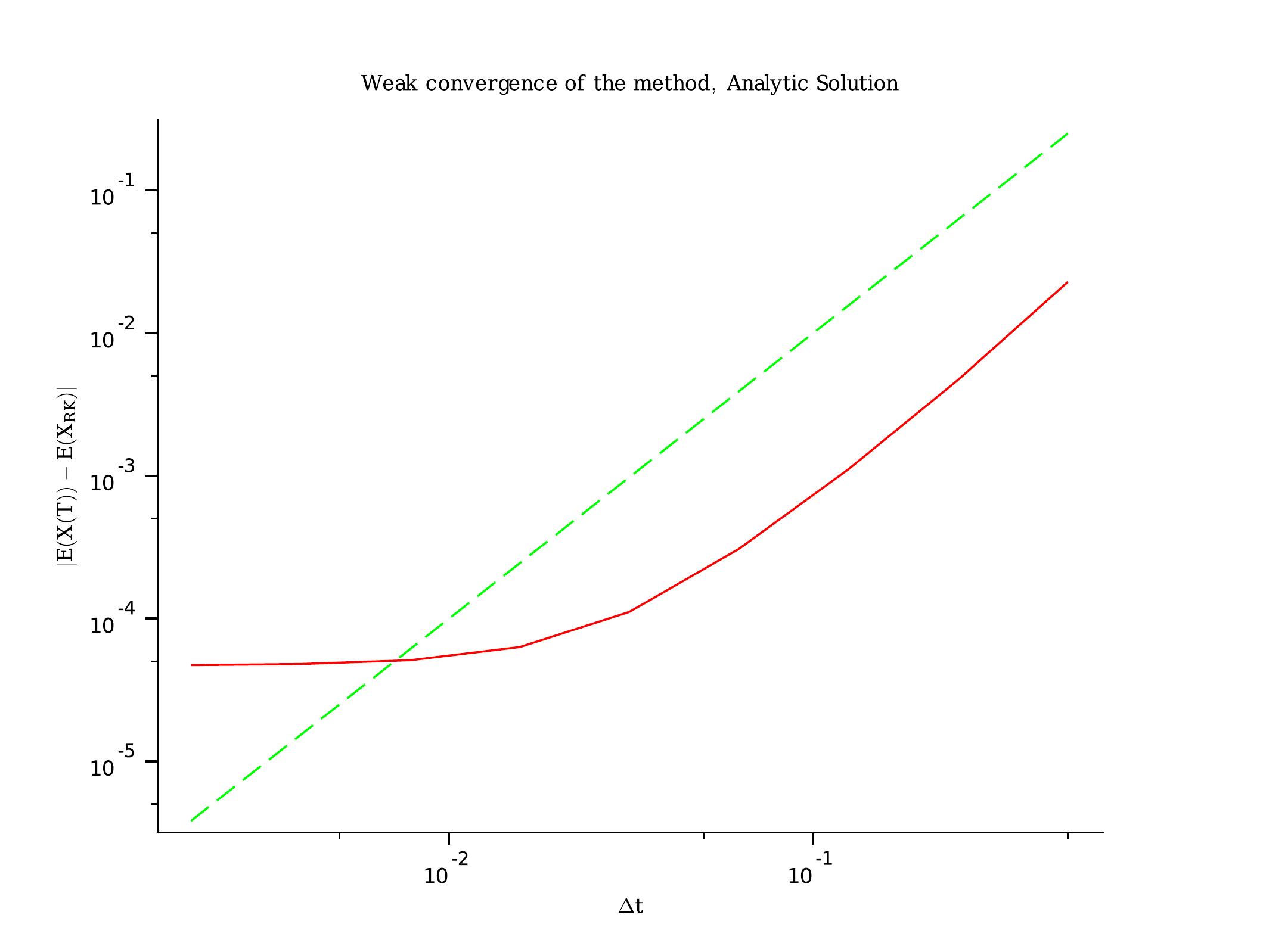}
\includegraphics[width=0.5\textwidth,clip]{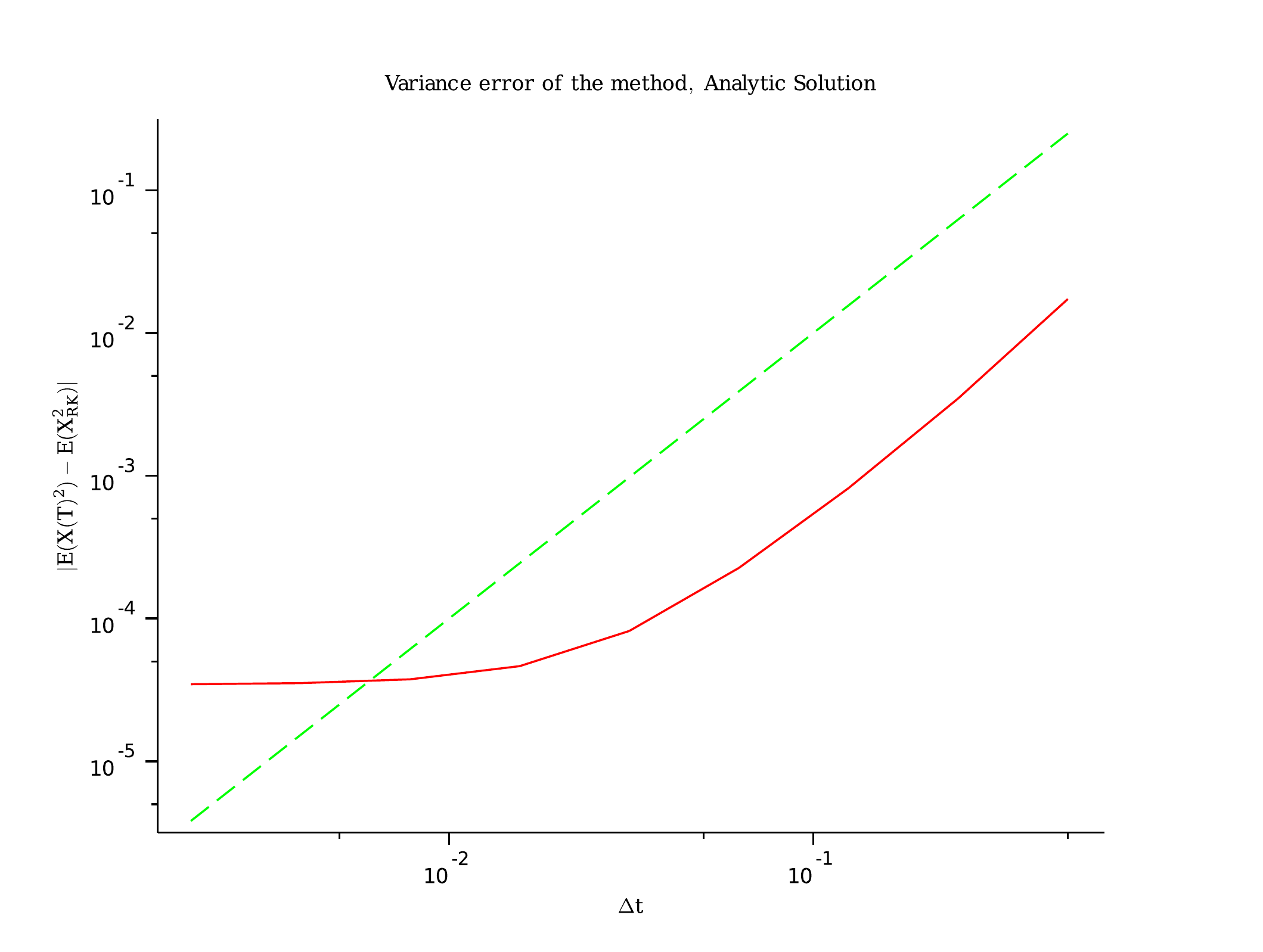}
\caption{$\sigma=0.001$-analytic case}
\label{analytic-test-0001}
\end{figure}

\begin{figure}[ht!]
%\centering
\includegraphics[width=0.5\textwidth,clip]{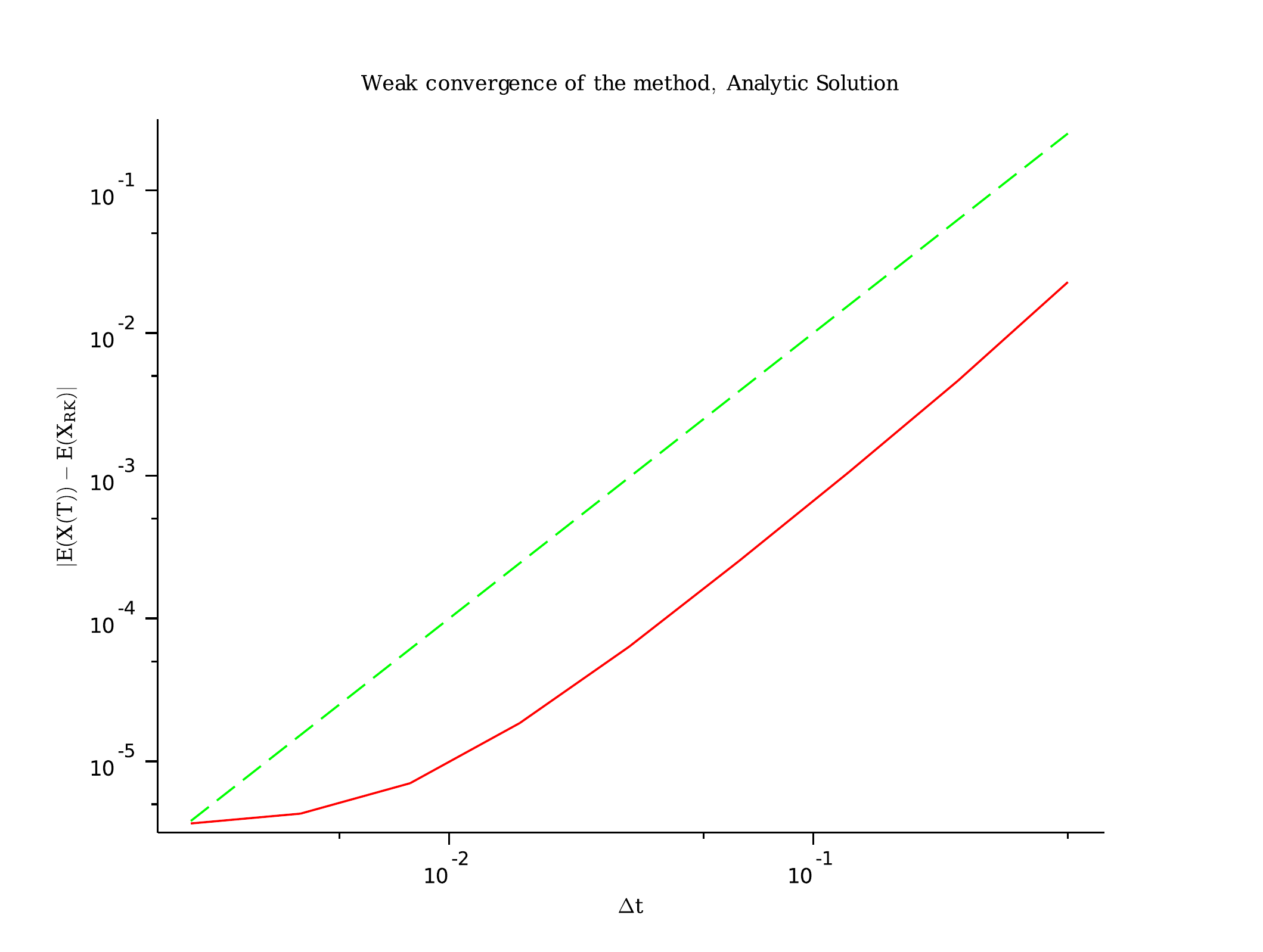}
\includegraphics[width=0.5\textwidth,clip]{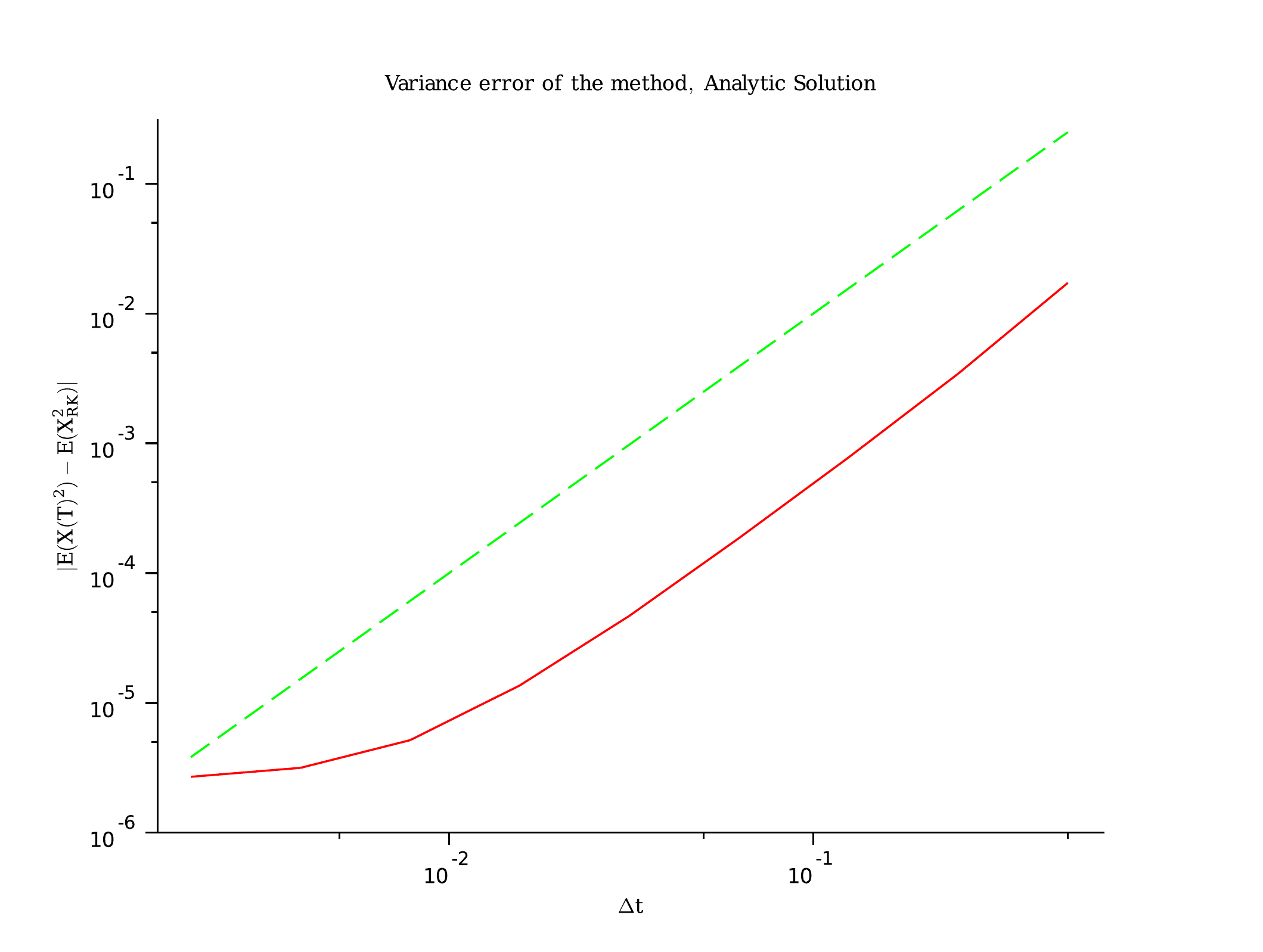}
\caption{$\sigma=0.0001$-analytic ase}
\label{analytic-test-00001}
\end{figure}

For the stochastic two body problem, we have no analytic solutions in order to test the convergence or not of the method. In order to study this problem, we follow a usual strategy in numerical analysis (see for example : we compute a {\it reference solution}, i.e. a solution compute with a very small time increment (in this case $2^{-10}$) and look for the difference between this reference solution and the algorithmic method. It gives a good idea of the order of the method. We provide a comparative test in the following for the Langevin equation. See Figures \ref{reference-test-0001} and \ref{reference-test-00001}.

\begin{figure}[ht!]
%\centering
\includegraphics[width=0.5\textwidth,clip]{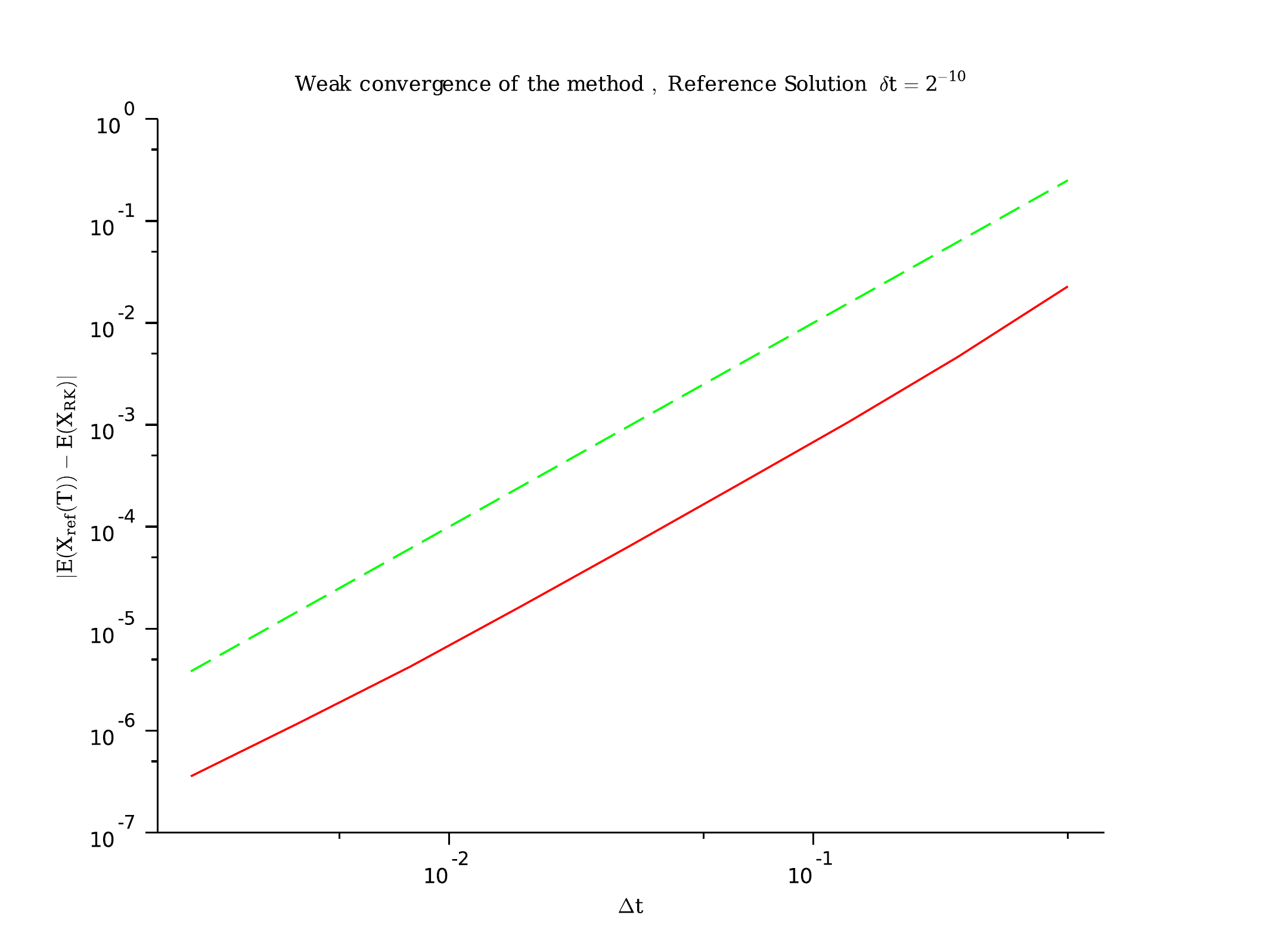}
\includegraphics[width=0.5\textwidth,clip]{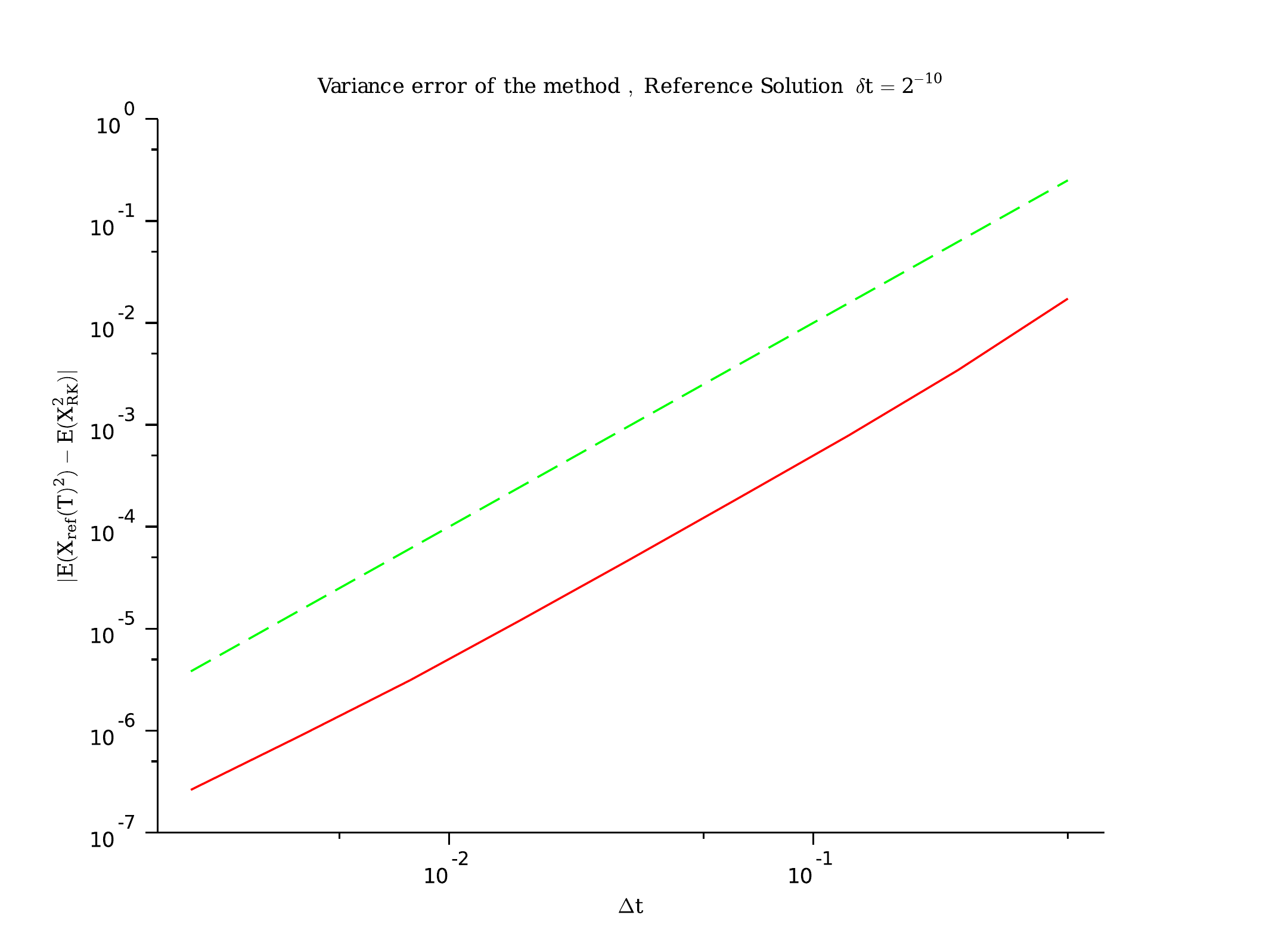}
\caption{$\sigma=0.001$-Reference solution case}
\label{reference-test-0001}
\end{figure}

\begin{figure}[ht!]
%\centering
\includegraphics[width=0.5\textwidth,clip]{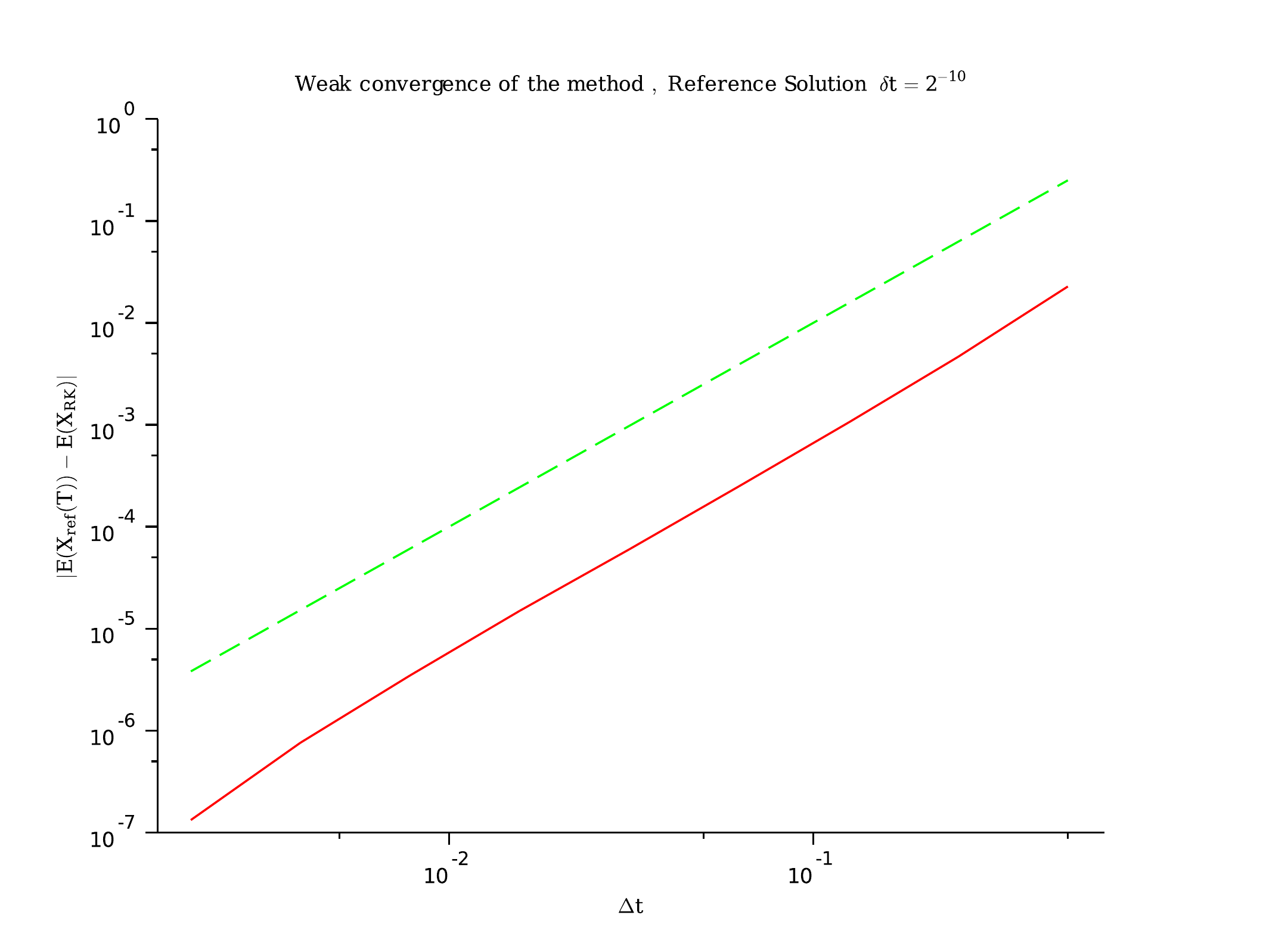}
\includegraphics[width=0.5\textwidth,clip]{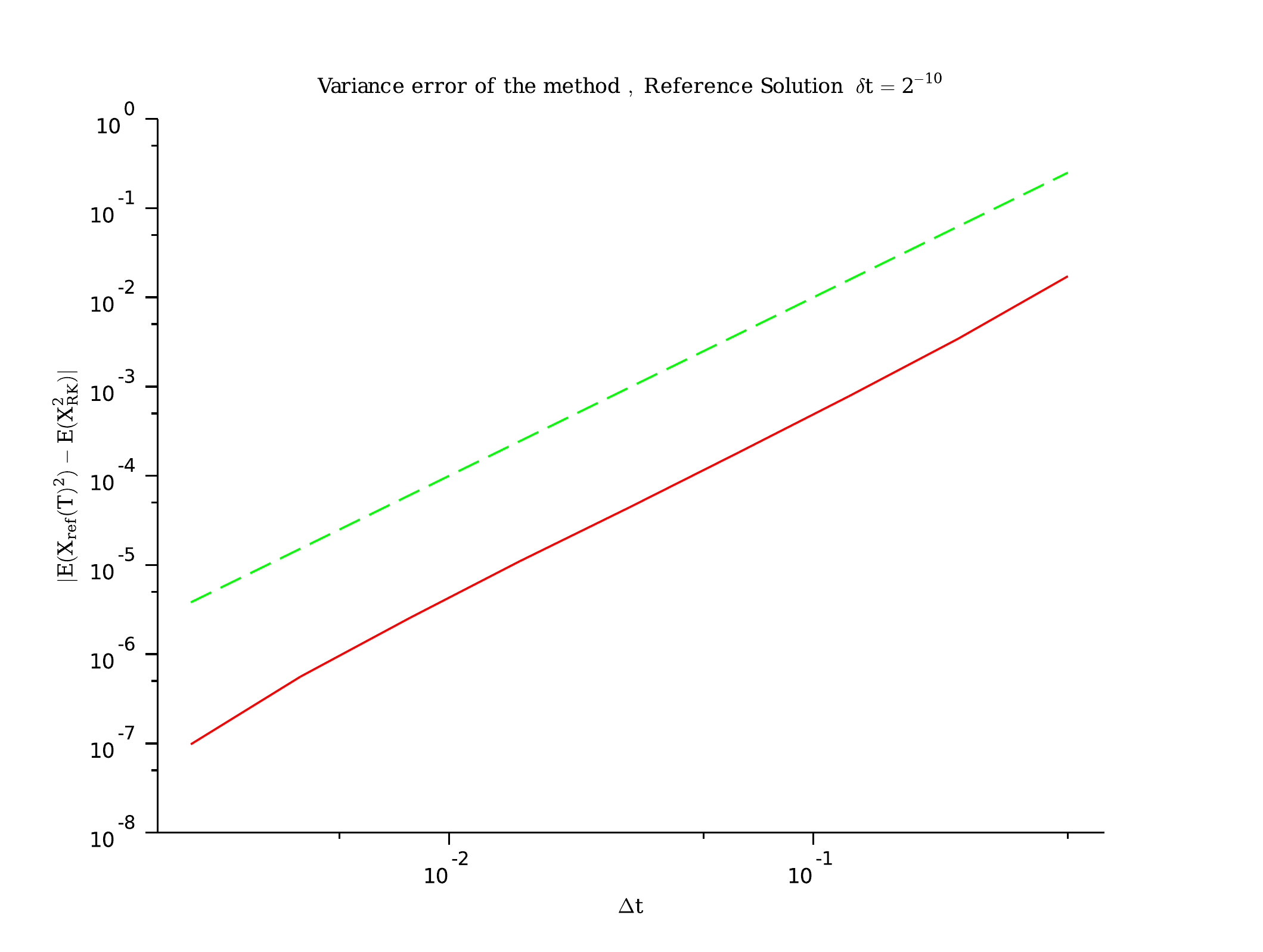}
\caption{$\sigma=0.0001$-Reference solution case}
\label{reference-test-00001}
\end{figure}

\subsection{Simulations for the stochastic two-body problem}

\subsubsection{Set of initial conditions}

Our simulations are made with the same initial conditions and integration time used by \cite{sharma}, which are :
\begin{eqnarray}
	r(0)&=& 1 \ AU ,\\
	\phi(0)&=& 1\ rad ,\\
	v(0)&=&0.01\ AU/TU ,\\
	\omega(0) &=& 1.1 \ rad/TU ,\\
	\sigma_{r}&=& 0.0121 \ TU^{-3/2} ,\\
	\sigma_{\phi}&=& 2.2 \times 10^{-4} \ AU.TU^{-3/2} ,
\end{eqnarray}
where AU is the Astronomical Unit which is the Earth-Sun distance and TU is the Time Unit which is approximately 58 days. These units are called canonical units (see \cite{bate71}).\\

The initials conditions are chosen such that the unperturbed motion is an ellipse and the diffusion constants $\sigma_{r}$ and $\sigma_{\phi}$ are chosen such that the stochastic perturbing force is proportional to $1/10$ of the gravitational force at the initial time. Numerical integration are performed over $\mathrm{15 TU}$ like in \cite{sharma}. 

\subsubsection{Numerical results}

The unperturbed trajectory as well as the perturbed one are plotted in Fig.~\ref{pierret:figsim} with color green and red respectively and we still use the same colors on figures to refer to the unperturbed and perturbed case. 
\begin{figure}[ht!]
\centering
\includegraphics[width=0.45\textwidth,clip]{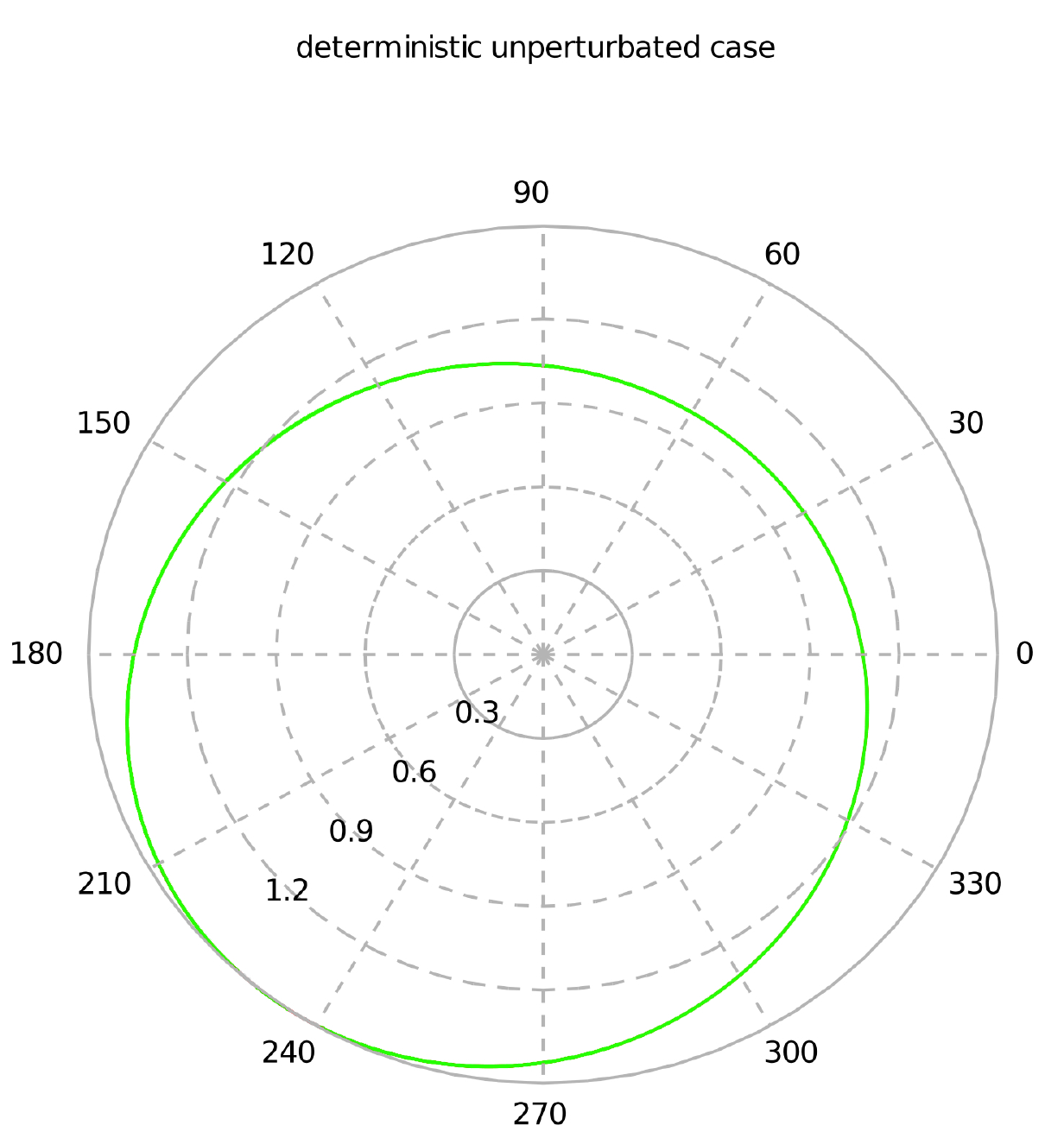}      
\includegraphics[width=0.45\textwidth,clip]{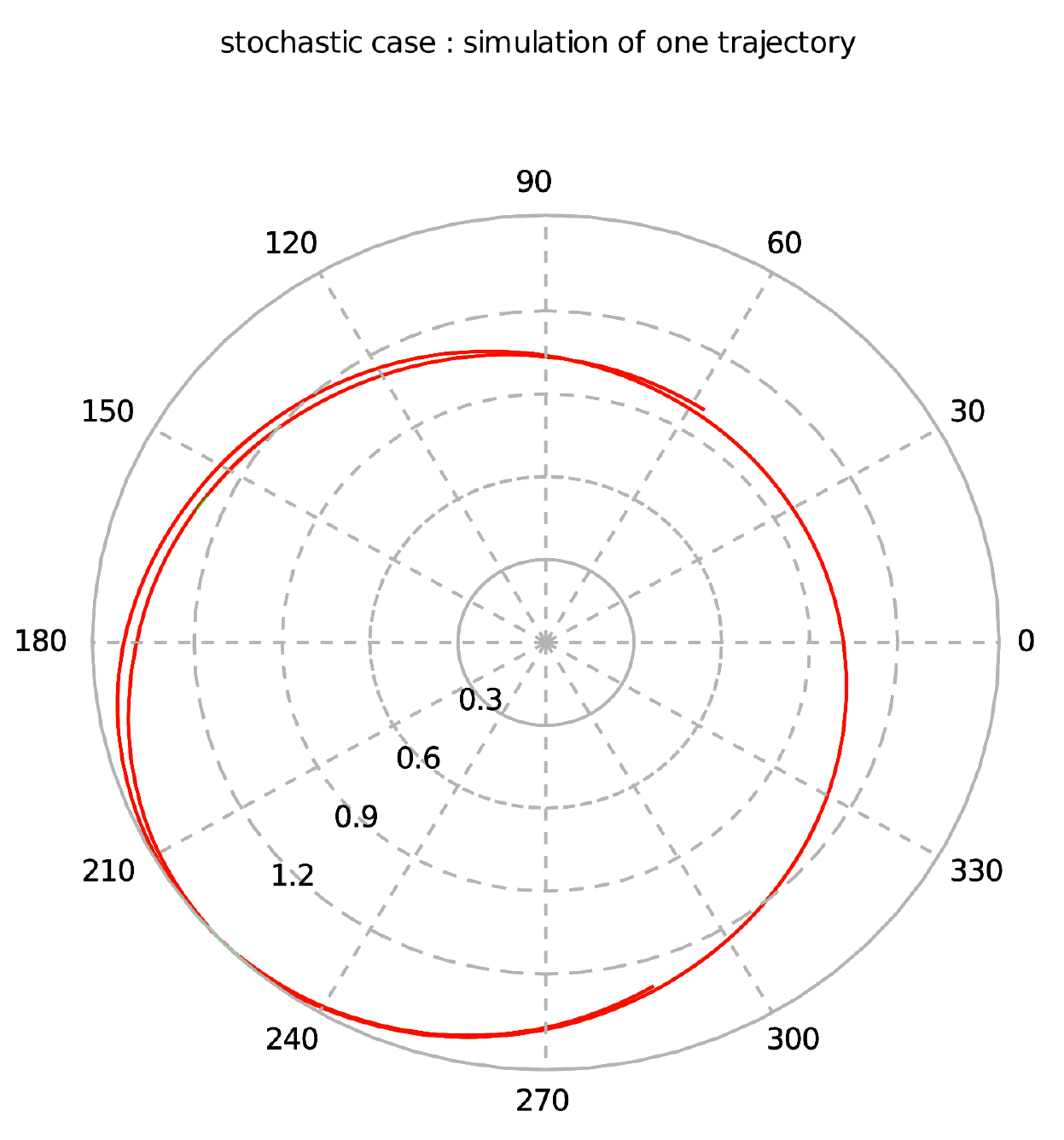}     %\label{key} 
%% Note the ABSENCE of the extension .pdf , .eps or .ps  !
\caption{{\bf Left:} Unperturbed case. {\bf Right:} Perturbed case. }
\label{pierret:figsim}
\end{figure}

Two other examples of solution are given in Figure \ref{fig-2b-other}.

\begin{figure}[ht!]
\centering
\includegraphics[width=0.45\textwidth,clip]{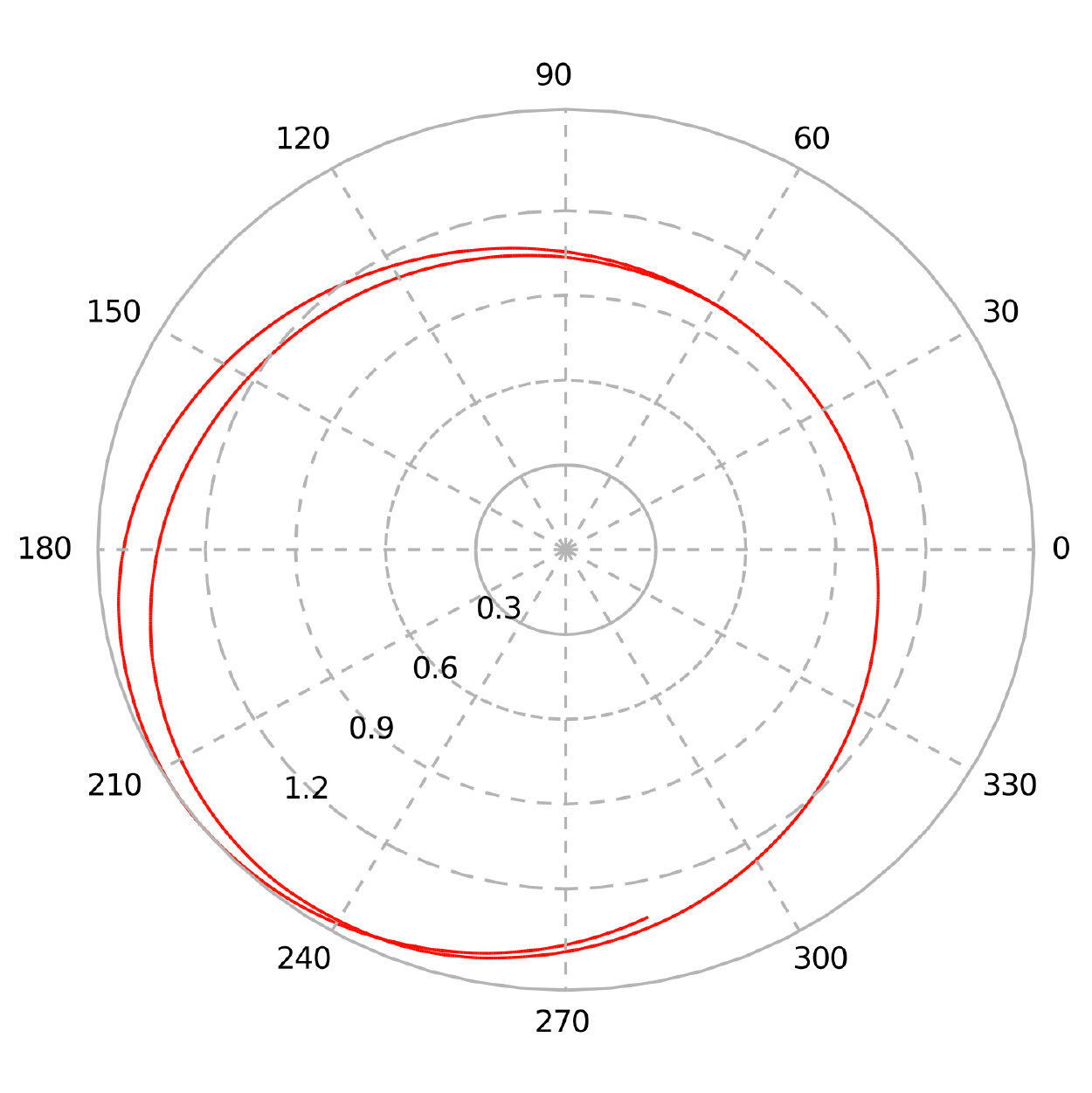}   
\includegraphics[width=0.45\textwidth,clip]{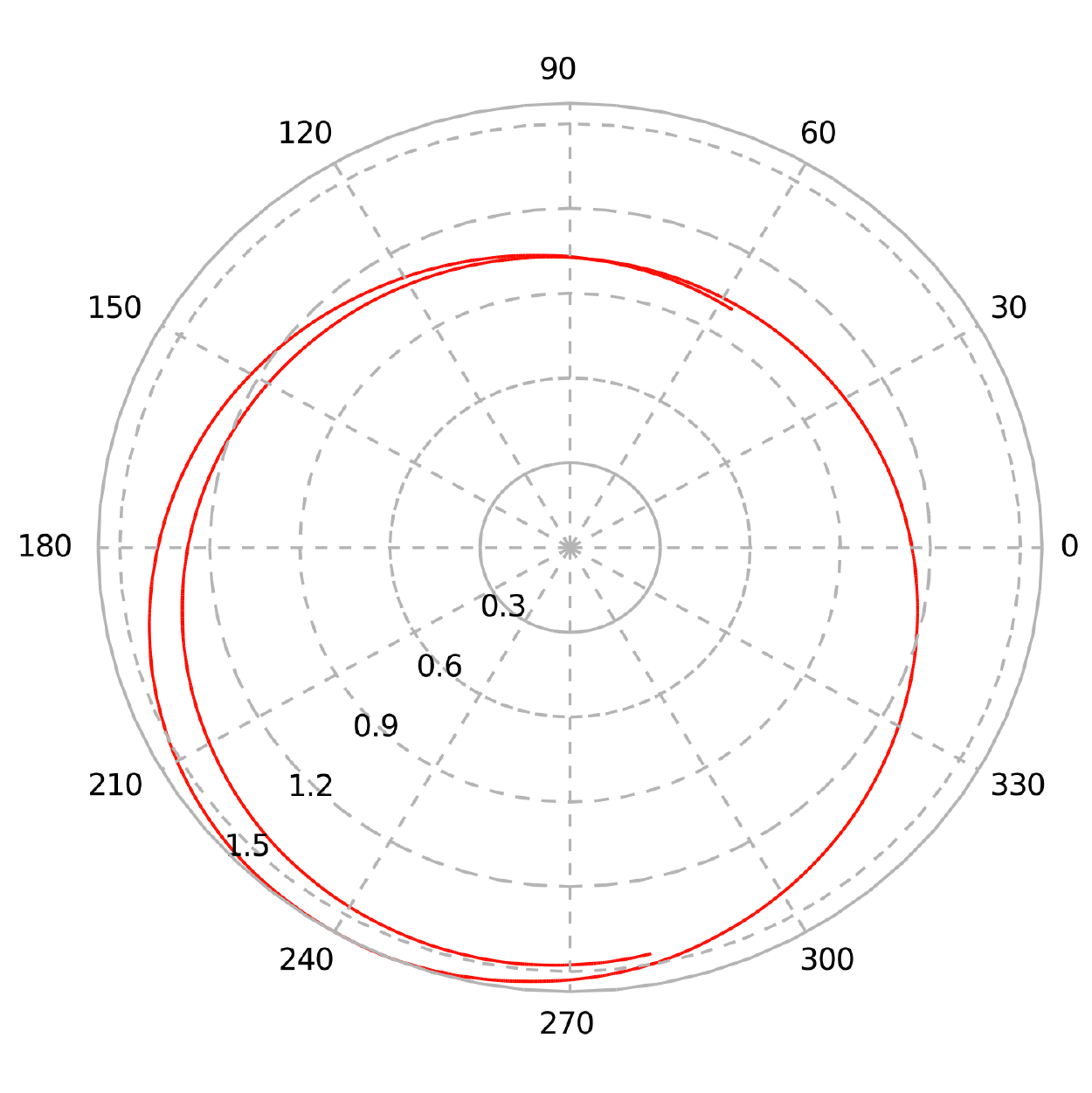}   
\caption{Other examples} 
\label{fig-2b-other}
\end{figure} 

The main feature of all these examples is a rapid divergence of the perturbed trajectory with respect to the unperturbed one despite the fact that the size of the perturbation is assumed very small. Such a fast divergence is not suspected in the context of classical perturbation theory and this idea justifies the fact to neglect many effects that are two small to induce a significant effect on the dynamics. However, as this example shows, if the nature of this effects does not enter in the framework of classical perturbation theory, as for example in the stochastic case, then one can not neglect the perturbation in the model as it induces significant different qualitative behaviours with respect to the unperturbed case.
  
\subsubsection{Accuracy and convergence of the numerical method}

As already pointed out in the Section concerning the stochastic Runge-Kutta methode of Kasdin and al. there exists up to now, no proof of the convergence of the algorithm. In the following, we give numerical evidences in the case of the stochastic two-body problem that this algorithm converges toward the exact solution of the equation in a weak sense. As we do not know an explicit form of the solution, we compute a {\it reference solution} with a very small time increment of order $2^{-10}$. The weak error is then computed with respect to this reference solution. See Figure \ref{weak_convergence_two_body-fig}.  

\begin{figure}[ht!]
\centering
\includegraphics[width=0.4\textwidth,clip]{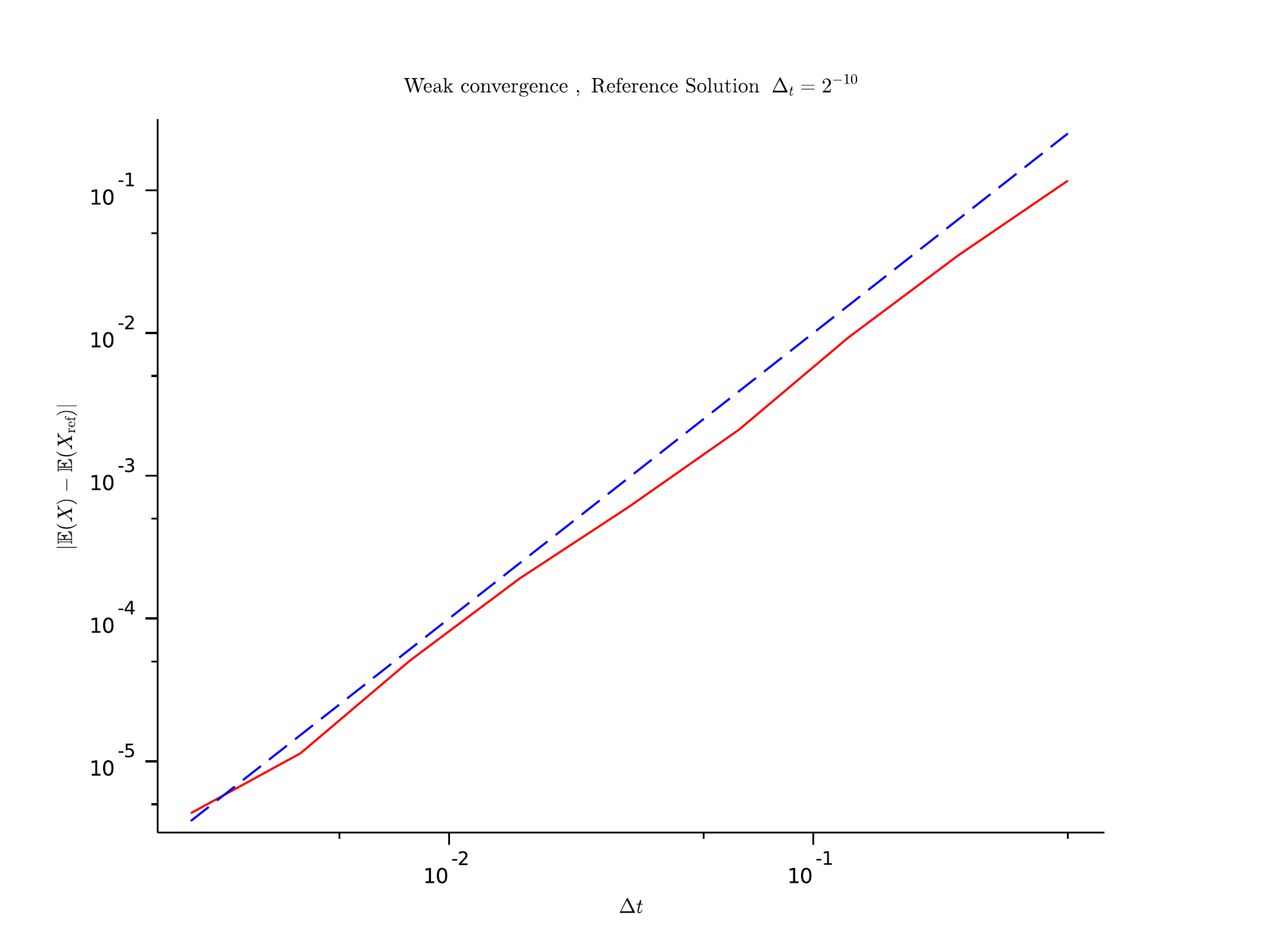}      
\caption{Numerical computation of the weak error}
\label{weak_convergence_two_body-fig}
\end{figure}

This simulation shows also that the stochastic Runge-Kutta is of order 2.\\ 

Moreover, the accuracy of the integrator can be tested by looking for the preservation of the weak first integral given by the angular momentum. Expectations are computed using a Monte Carlo method. Our result indicates a very good behavior of the integrator with respect to weak first integrals (see Fig.~\ref{pierret:figfirstint}).

\begin{figure}[ht!]
 \centering
 \includegraphics[width=0.4\textwidth,clip]{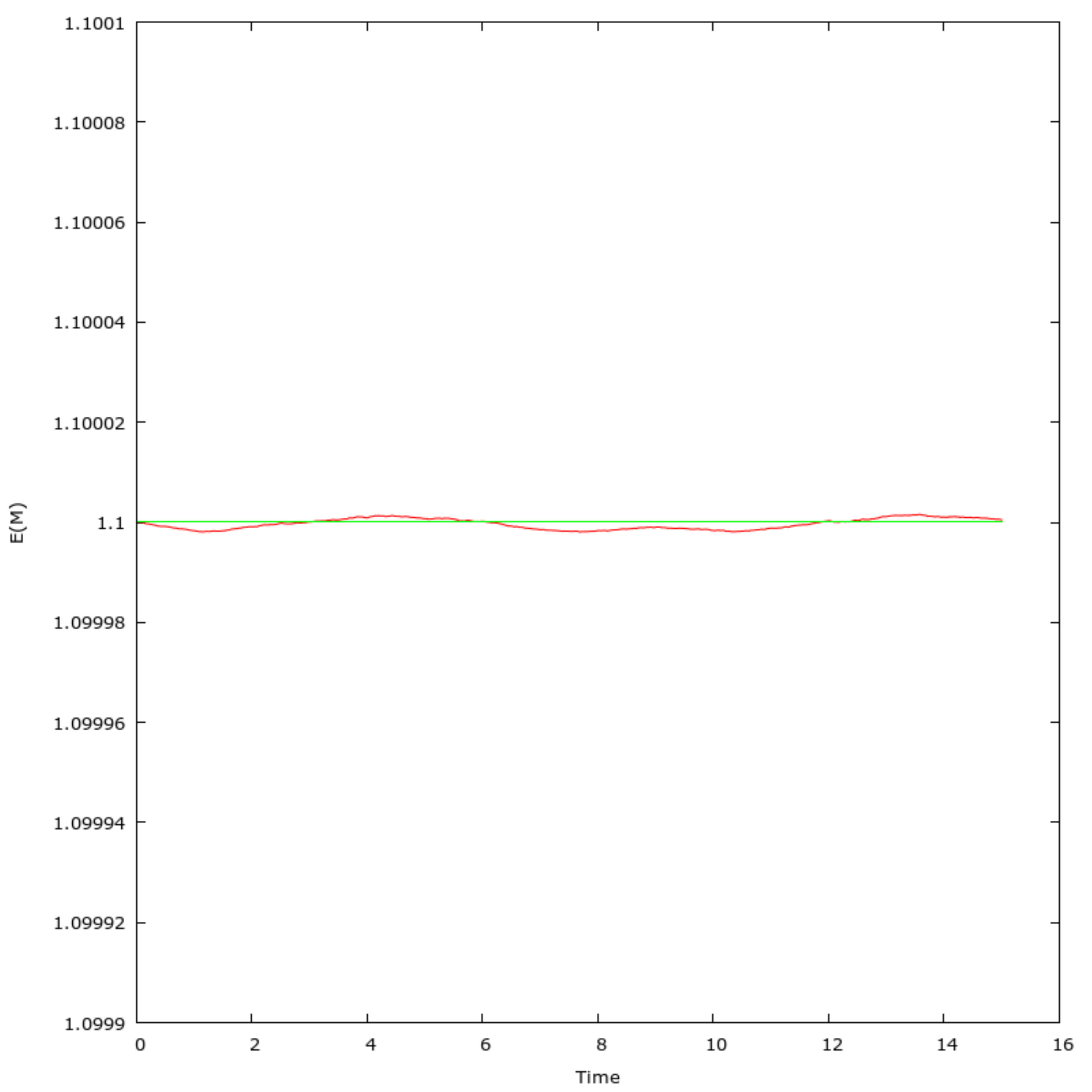}      
 \includegraphics[width=0.4\textwidth,clip]{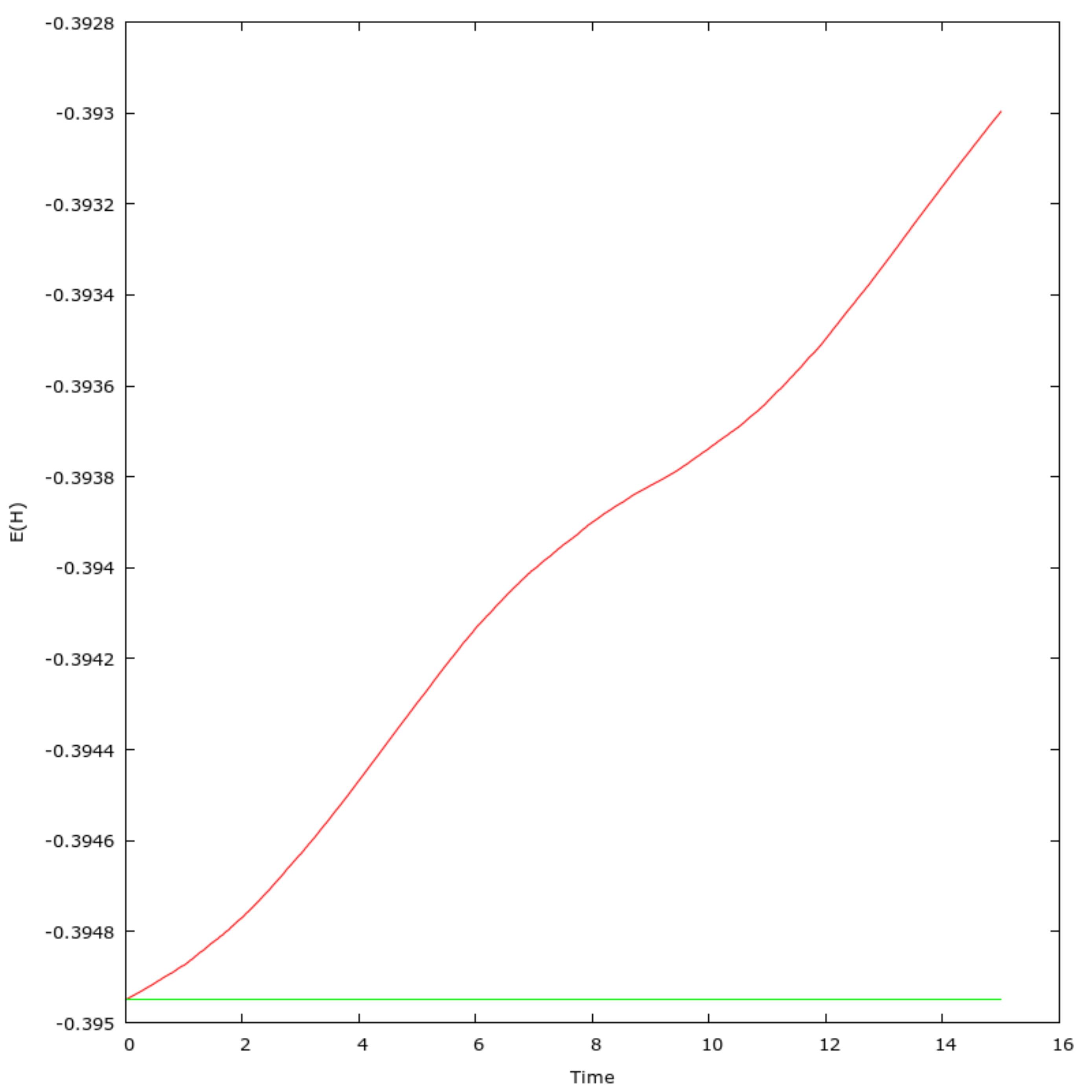}      
%% Note the ABSENCE of the extension .pdf , .eps or .ps  !
\caption{{\bf Left:} $\mbox{\rm E}(M(X_t))$. {\bf Right:} $\mbox{\rm E}(H(X_t))$. }
\label{pierret:figfirstint}
\end{figure}

\section{Stochastic planar Gauss equations}
\label{stochastic-gauss}
To study the variations of orbital elements we derive a stochastic version of the classical Gauss equations (see \cite{goldstein},p.96-103). This is done for a general version of our stochastic two-body problem allowing more general stochastic perturbation forces. Numerical results are then provided.

\subsection{Notations and the stochastic model}

We use notations of Section \ref{two-body-intro}. We denote by $m=\frac{M_S M_P}{M_S+M_P}$, $k=GM_S M_P$ and $\mu=k/m$.\\

We recall that \textbf{r} is the vector position from $M_S$ to $M_P$, \textbf{v} the velocity vector. The perturbed equations of motion are 
\begin{align}
d\textbf{r} &= \textbf{v}dt, \label{eqdvecr} \\
d\textbf{v} &= \left( -\frac{\mu}{r^3}\textbf{r} + \textbf{a}_P \right)dt, \label{eqdvecv}
\end{align}
where $\textbf{a}_P$ is the perturbing acceleration induced by the perturbing force $\textbf{F}$.

In $\{\textbf{e}_R,\textbf{e}_T,\textbf{e}_N\}$, the variation of the position vector $\textbf{r} = r \textbf{e}_R$ is given by
\begin{align}
d\textbf{r} &= dr \textbf{e}_R + r d\theta \textbf{e}_T .
\end{align}
We denote by $v$ the radial velocity and $w$ the transverse velocity defined by
\begin{align}
dr=vdt ,\\
d\theta=wdt .
\end{align}
The variation of the position vector is finally given by
\begin{align}
d\textbf{r} &= \left(v \textbf{e}_R + r w \textbf{e}_T \right)dt .
\end{align}

We identify the velocity vector $\textbf{v}$ as
\begin{align}
\textbf{v}=v \textbf{e}_R + r w \textbf{e}_T .
\end{align}
It follows that the variation of the velocity vector is given by
\begin{align}
d\textbf{v} = (dv-rw^2dt)\textbf{e}_R + (2vwdt+rdw)\textbf{e}_T .
\end{align}

In order to get the expression of the radial and transverse acceleration, we will precise the expression of the accelerating force $\textbf{a}_P$. \\

We assume that the perturbing acceleration is of stochastic nature modelled by a stochastic processes. Precisely, we make the following assumption :\\

{\bf Stochastic forces (S)} : {\it We assume that the velocity vector $d\textbf{v}_{P}$ satisfies the following Itô stochastic differential equation 
\begin{align}
d\textbf{v}_{P} \equiv \textbf{a}_P dt = \left(\begin{matrix} \bar{R} \\ \bar{T} \end{matrix}\right)dt + \left(\begin{matrix}
\tR_1 & \tR_2 \\ 
\tT_1 & \tT_2 
\end{matrix}\right)  \cdot dB_t
\end{align}

where $\bar{R}$ and $\bar{T}$ are in our problem, the deterministic part of the perturbation and $\tilde{R}=(\tR_1, \tR_2 )$ and $\tilde{T}=(\tT_1,\tT_2 )$ the purely stochastic part of the perturbation and $B_t=(B^R_t ,B^T_t)$ is a two-dimensional Brownian motion.}\\

Under assumption (S) the expression of the radial and transverse accelerations can be written as
\begin{align}
dv &= \left( rw^2-\frac{\mu}{r^2} + \bar{R} \right)dt + \tilde{R}\cdot dB_t ,\label{dv}\\
dw &= \left(-\frac{2vw}{r} + \frac{\bar{T}}{r} \right)dt + \frac{\tilde{T}}{r}\cdot dB_t \label{dw}.
\end{align}

The previous stochastic model is denoted by ($\star$) in the following. The main problem is to determine explicitly the set of stochastic differential equations governing the behaviour of the orbital elements $a,e$ and $\omega$. This is done in the next Section.

\subsection{Stochastic Gauss formula}

Using the notations of the previous Section, we obtain the following set of stochastic differential equations controlling the stochastic behaviour of the semi-major axis, eccentricity and pericenter for our stochastic model ($\star$). All the proofs are given in Appendix.

\begin{lemma}[Variation of semi-major axis] 
\label{lemme-major-axis}
The variation of the semi-major axis for the stochastic model ($\star$) is given by 
\begin{align}
		da&=\bigg[ \frac{2 a^{3/2}}{\sqrt{\mu (1-e^2)}}\left(e \sin f \bar{R} + (1+e\cos f) \bar{T}  \right) \\
		&+ \frac{a^2}{\mu} \left( \left(1+\frac{4e^2 \sin^2 f}{1-e^2}\right)\tilde{R}^2 + \left(1+\frac{4(1+e\cos f)^2}{1-e^2}\right)\tilde{T}^2 \right) \nonumber \\
		&+\frac{8a^2}{\mu (1-e^2)}e\sin f(1+e\cos f)\tilde{R}\cdot\tilde{T} \bigg]dt \nonumber \\
		&+ \frac{2 a^{3/2}}{\sqrt{\mu (1-e^2)}}\left(e \sin f \tilde{R} + (1+e\cos f) \tilde{T}  \right)\cdot dB_t . \nonumber \label{da}
\end{align}
\end{lemma}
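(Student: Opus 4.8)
The plan is to obtain the semi-major axis as an explicit function of the instantaneous state variables and then apply the multi-dimensional It\^o formula of Section~2. The natural choice is the \emph{vis-viva} relation, which expresses $a$ through the specific energy of the osculating orbit:
\[
\frac{1}{a} = \frac{2}{r} - \frac{v^2 + r^2 w^2}{\mu}.
\]
Since the right-hand side depends only on $r$, $v$ and $w$ (and not on $\phi$), I would regard $a=a(r,v,w)$ and apply It\^o's formula to the process $a(X_t)$, with $X_t^{\mathsf T}=(r,\phi,v,w)$ driven by $dr = v\,dt$ together with equations (\ref{dv}) and (\ref{dw}). Because $dr$ carries no noise and $a$ is independent of $\phi$, only the $v$ and $w$ directions contribute to the diffusion and to the second-order term.

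First I would record, by differentiating the vis-viva identity implicitly, the derivatives $\partial a/\partial v = 2a^2 v/\mu$, $\partial a/\partial w = 2a^2 r^2 w/\mu$, $\partial a/\partial r = a^2(2/r^2 + 2rw^2/\mu)$, together with the three second derivatives in $v,w$. The relevant It\^o multiplication rules, using the independence of $B^R_t$ and $B^T_t$, are $(dr)^2=0$ and
\[
(dv)^2 = \tR^2\,dt,\qquad (dw)^2 = \frac{\tT^2}{r^2}\,dt,\qquad (dv)(dw) = \frac{\tR\cdot\tT}{r}\,dt.
\]
The drift then naturally splits into three pieces. The purely Keplerian contribution $\partial_r a\,v + \partial_v a\,(rw^2 - \mu/r^2) + \partial_w a\,(-2vw/r)$ must cancel identically, reflecting the classical conservation of $a$ along unperturbed motion; I would verify this cancellation explicitly. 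The deterministic perturbation contributes $\frac{2a^2}{\mu}(v\bR + rw\bT)$, while the It\^o (Hessian) correction contributes $\tfrac12\big[\partial_{vv}a\,\tR^2 + 2\partial_{vw}a\,\tR\cdot\tT/r + \partial_{ww}a\,\tT^2/r^2\big]$. The diffusion part reads $\frac{2a^2}{\mu}(v\tR + rw\tT)\cdot dB_t$.

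Finally I would pass to the orbital elements using the osculating-orbit relations $v = \sqrt{\mu/(a(1-e^2))}\,e\sin f$ and $rw = \sqrt{\mu/(a(1-e^2))}\,(1+e\cos f)$, which follow from $r = a(1-e^2)/(1+e\cos f)$ and from the angular momentum identity $r^2 w = \sqrt{\mu a(1-e^2)}$. Substituting these into the deterministic drift and into the diffusion term produces the common prefactor $2a^{3/2}/\sqrt{\mu(1-e^2)}$ multiplying the combinations $e\sin f\,\bR + (1+e\cos f)\bT$ and $e\sin f\,\tR + (1+e\cos f)\tT$, respectively. Rewriting $v^2$, $r^2 w^2$ and the product $v\,rw$ inside the Hessian term with these same relations yields the coefficients $1 + 4e^2\sin^2 f/(1-e^2)$, $1 + 4(1+e\cos f)^2/(1-e^2)$ and $8\,e\sin f(1+e\cos f)/(1-e^2)$, giving exactly formula (\ref{da}).

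The main obstacle will be the bookkeeping of the It\^o correction: each second derivative of $a$ involves differentiating $a^2$, producing further factors of $a$ and a constant piece, and these must be contracted correctly with the quadratic-covariation matrix of $(v,w)$. The delicate point is then converting $v^2$, $r^2w^2$ and $v\,rw$ into true-anomaly form so that the $1/(1-e^2)$ factors recombine precisely into the stated coefficients. By contrast, the cancellation of the Keplerian drift and the treatment of the deterministic and diffusion terms are routine once the osculating relations are in place.
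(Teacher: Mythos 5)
Your proposal is correct, and I have checked the key computations: the Keplerian drift does cancel, the first-order perturbative drift and the diffusion term give $\frac{2a^2}{\mu}(v\bar R+rw\bar T)$ and $\frac{2a^2}{\mu}(v\tilde R+rw\tilde T)\cdot dB_t$, and the Hessian contraction with $(dv)^2=\tilde R^2\,dt$, $(dw)^2=\tilde T^2 dt/r^2$, $(dv)(dw)=\tilde R\cdot\tilde T\,dt/r$ reproduces exactly the three quadratic coefficients of the lemma after substituting $v^2=\frac{\mu}{a(1-e^2)}e^2\sin^2 f$, $(rw)^2=\frac{\mu}{a(1-e^2)}(1+e\cos f)^2$ and $v\,rw=\frac{\mu}{a(1-e^2)}e\sin f(1+e\cos f)$. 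The route is organized differently from the paper's. The paper first computes, in the appendix preliminaries, the It\^o differential of the energy $H=\frac12 m(v^2+r^2w^2)-k/r$ under the perturbed dynamics, and then applies It\^o a second time to the scalar relation $a=-k/(2H)$, which yields $da=\frac{k}{2H^2}dH-\frac{k}{2H^3}\tilde H\cdot\tilde H\,dt$; the orbital-element substitutions then give the result. Your vis-viva identity $1/a=2/r-(v^2+r^2w^2)/\mu$ is precisely $a=-k/(2H)$ written out in the state variables, so you are performing the same composition in a single It\^o step on $a(r,v,w)$ rather than factoring through the intermediate process $H$. Your version is self-contained and makes the cancellation of the unperturbed drift explicit; the paper's version is shorter on the page because the formula for $dH$ is needed anyway (it is used again in the eccentricity proof and plotted in the simulations), so factoring through the energy amortizes that computation. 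One small caution: the cross-variation $dv\,dw=\tilde R\cdot\tilde T\,dt/r$ comes from the rule $dB_{t,i}dB_{t,j}=\delta_{ij}dt$ applied to the common two-dimensional Brownian motion driving both equations, not from any independence of radial and tangential noises --- in the general model $(\star)$ the matrix $\bigl(\begin{smallmatrix}\tilde R_1 & \tilde R_2\\ \tilde T_1 & \tilde T_2\end{smallmatrix}\bigr)$ need not be diagonal, and this cross term is generically nonzero; your formula is right, but the justification you give for it is slightly misstated.
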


The proof is given in Section \ref{proof-major-axis}.

\begin{lemma}[Variation of eccentricity] 
\label{lemme-eccentricity}
The variation of the eccentricity for the stochastic model ($\star$) is given by 
\begin{align}
de 	&= \bigg[ \sqrt{\frac{a(1-e^2)}{\mu}}\left(\sin f \bar{R} + (\cos f + \frac{e+\cos f}{1+e\cos f})\bar{T} \right) +  \frac{a(1-e^2)\cos^2 f}{2e\mu}\tR^2  \\
	&+\frac{a(1-e^2)}{\mu e}\left(2-\frac{\cos f}{2}\left(\frac{2+e\cos f}{1+e \cos f}\right) \left(\cos f + \frac{e+\cos f}{1+e\cos f}\right)\right)\tT^2 \nonumber \\
	&+ \frac{a(1-e^2)}{\mu e (1+e\cos f)}(e\sin^3 f - \sin 2f) \tR \cdot \tT\bigg]dt \nonumber \\
	&+\sqrt{\frac{a(1-e^2)}{\mu}}\left(\sin f \tR + (\cos f + \frac{e+\cos f}{1+e\cos f})\tT \right)\cdot dB_t \cdot \nonumber \label{de}
\end{align}
\end{lemma}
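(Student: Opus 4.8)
The plan is to realize the osculating eccentricity as a $\mathcal{C}^2$ function of the instantaneous dynamical variables and then apply the multi-dimensional It\^o formula. First I would express $e$ through the two rotation-invariant Keplerian quantities: the specific energy $\mathcal{E}=\tfrac12(v^2+r^2w^2)-\mu/r$ and the specific angular momentum $h=r^2w$, via the relation $e^2=1+2\mathcal{E}h^2/\mu^2$. Since neither $\mathcal{E}$ nor $h$ involves $\theta$, the eccentricity is a function $e=e(r,v,w)$ alone, smooth away from the rectilinear configuration $e=0$, so the It\^o formula applies to the process $(r,\theta,v,w)$ driven by $B_t=(B^R_t,B^T_t)$.

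Next I would substitute $dr=v\,dt$ together with (\ref{dv}) and (\ref{dw}) into the It\^o expansion of $e(r,v,w)$. The only nonzero quadratic variations are $(dv)^2=\tR^2\,dt$, $(dw)^2=(\tT^2/r^2)\,dt$ and $dv\,dw=(\tR\cdot\tT/r)\,dt$, because $r$ carries no martingale part. This splits $de$ into three pieces: a first-order drift
\[
\Big[\partial_r e\,v+\partial_v e\,(rw^2-\tfrac{\mu}{r^2}+\bR)+\partial_w e\,(-\tfrac{2vw}{r}+\tfrac{\bT}{r})\Big]dt,
\]
the It\^o correction $\tfrac12\big[\partial_v^2 e\,\tR^2+r^{-2}\partial_w^2 e\,\tT^2+2r^{-1}\partial_{vw}^2 e\,\tR\cdot\tT\big]dt$, and the martingale part $\big[\partial_v e\,\tR+r^{-1}\partial_w e\,\tT\big]\cdot dB_t$.

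A useful simplification is that the unperturbed Keplerian drift vanishes: since $e$ is constant along the deterministic two-body flow, one has $\partial_r e\,v+\partial_v e\,(rw^2-\mu/r^2)+\partial_w e\,(-2vw/r)=0$ identically, so the first-order drift collapses to the genuine Gauss terms $\partial_v e\,\bR+r^{-1}\partial_w e\,\bT$. I would then convert the first derivatives to orbital elements using the conic relations $r=a(1-e^2)/(1+e\cos f)$, $v=\sqrt{\mu/p}\,e\sin f$ and $rw=\sqrt{\mu/p}\,(1+e\cos f)$, with $p=a(1-e^2)$ the semi-latus rectum; these relations parametrize $(r,v,w)$ by $(a,e,f)$, which is why $f$ enters the coefficients even though $e$ has no explicit $\theta$-dependence. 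They give $\partial_v e=\sqrt{p/\mu}\sin f$ and $r^{-1}\partial_w e=\sqrt{p/\mu}(\cos f+\cos E)$, with $\cos E=(e+\cos f)/(1+e\cos f)$, which reproduces exactly the $\bR,\bT$ drift terms and the whole martingale part of the statement, by the same algebra as in the classical Gauss derivation.

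The genuinely new content, and the main obstacle, is the It\^o correction line, which demands the second derivatives $\partial_v^2 e$, $\partial_w^2 e$ and $\partial_{vw}^2 e$ in closed form. The cleanest route is to differentiate $e^2$, which is polynomial in $(v,\,r^2w^2,\,1/r)$, and recover $e$ through $\partial_x e=\tfrac{1}{2e}\partial_x(e^2)$ and $\partial_{xy}^2 e=\tfrac{1}{2e}\partial_{xy}^2(e^2)-\tfrac{1}{4e^3}\partial_x(e^2)\,\partial_y(e^2)$; the factors $1/e$ in the correction coefficients originate precisely from these prefactors. Feeding in the conic relations and reducing everything to powers of $\sin f$ and $\cos f$ before collecting should yield the coefficients $\frac{p\cos^2 f}{2e\mu}$, $\frac{p}{\mu e}\left(2-\frac{\cos f}{2}\,\frac{2+e\cos f}{1+e\cos f}(\cos f+\cos E)\right)$ and $\frac{p}{\mu e(1+e\cos f)}(e\sin^3 f-\sin 2f)$. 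The delicate points are keeping the mixed term $e\sin^3 f-\sin 2f$ and the factor $(2+e\cos f)/(1+e\cos f)$ consistent through the simplification, which I would confirm by expanding both sides in $\sin f$ and $\cos f$.
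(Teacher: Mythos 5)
Your proposal is correct and follows essentially the same route as the paper: both rest on the relation $e^2=1+2M^2H/(mk^2)$ (your $e^2=1+2\mathcal{E}h^2/\mu^2$) together with the multi-dimensional It\^o formula, with the first-order terms giving the classical Gauss drift and the quadratic-variation terms supplying the new $\tR^2$, $\tT^2$, $\tR\cdot\tT$ corrections. The only difference is bookkeeping --- the paper first computes $dM$ and $dH$ as It\^o processes and then differentiates $e(M,H)$, whereas you expand $e(r,v,w)$ directly --- and the two are equivalent; your stated first derivatives and the coefficient $\frac{a(1-e^2)\cos^2 f}{2e\mu}$ of $\tR^2$ check out against the lemma.
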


The proof is given in Section \ref{proof-eccentricity}.

\begin{lemma}[Variation of the pericenter] 
\label{lemme-pericenter}
The variation of the pericenter for the stochastic model ($\star$) is given by 
\begin{align}
	d\omega &=\bigg[ \sqrt{\frac{a(1-e^2)}{\mu}}\left( -\frac{\cos f}{e} \bR + \frac{\sin f}{e} \left(\frac{2+e\cos f}{1+e\cos f} \right)\bT\right)  \\
	&+ \frac{a(1-e^2)}{\mu e^2}\bigg( \frac{\sin 2f}{2} \tR^2 - \left(e+\cos f(2+e\cos f)^2 \right)\frac{\sin f}{(1+e\cos f)^2} \tT^2 + \left(\frac{2+e\cos f}{1+e\cos f}\right)\cos 2f \tR\cdot\tT \bigg) \bigg] dt  \nonumber \\ 
	&+\sqrt{\frac{a(1-e^2)}{\mu}}\left( -\frac{\cos f}{e} \tR + \frac{\sin f}{e} \left(\frac{2+e\cos f}{1+e\cos f} \right)\tT \right) \cdot dB_t .\nonumber
\end{align}
\end{lemma}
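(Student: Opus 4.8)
The plan is to realise the pericenter $\omega$ as an explicit $\mathcal{C}^2$ function of the dynamical state $(r,\phi,v,w)$ and then push the stochastic model ($\star$) through the multi-dimensional It\^o formula of Section \ref{sharma-model}, exactly as for Lemmas \ref{lemme-major-axis} and \ref{lemme-eccentricity}. The natural representation is $\omega = \phi - f$, where the true anomaly $f$ is pinned down by the two Keplerian relations $e\cos f = p/r - 1$ and $e\sin f = v\sqrt{p/\mu}$, with $p = a(1-e^2)$ the semi-latus rectum; equivalently one may use the Laplace--Runge--Lenz vector, whose fixed-frame components are $e\cos\omega$ and $e\sin\omega$. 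Either route expresses $\omega$ in terms of $(r,\phi,v,w)$ through the energy and eccentricity formulas that also define $a$ and $e$.

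Applying the It\^o formula to $\omega(r,\phi,v,w)$ along solutions of ($\star$), and using that only (\ref{dv})--(\ref{dw}) carry noise together with $dB^R_t dB^R_t = dB^T_t dB^T_t = dt$ and $dB^R_t dB^T_t = 0$, the quadratic covariations collapse to $dv\,dv = \tR^2\,dt$, $dw\,dw = (\tT^2/r^2)\,dt$ and $dv\,dw = (\tR\cdot\tT/r)\,dt$, all remaining products vanishing. Writing partial derivatives as subscripts, this yields
\begin{align*}
d\omega &= \Big[ \omega_r v + \omega_\phi w + \omega_v\big(rw^2 - \tfrac{\mu}{r^2} + \bR\big) + \omega_w\big(-\tfrac{2vw}{r} + \tfrac{\bT}{r}\big) \\
&\qquad + \tfrac12\big( \omega_{vv}\,\tR^2 + \omega_{ww}\tfrac{\tT^2}{r^2} + 2\omega_{vw}\tfrac{\tR\cdot\tT}{r}\big)\Big]dt + \Big(\omega_v\,\tR + \omega_w\tfrac{\tT}{r}\Big)\cdot dB_t .
\end{align*}

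The unperturbed part of the drift, $\omega_r v + \omega_\phi w + \omega_v(rw^2 - \mu/r^2) + \omega_w(-2vw/r)$, vanishes identically on Keplerian motion because $\omega$ is conserved in the unperturbed problem; the surviving first-order terms are therefore $\omega_v\bR + (\omega_w/r)\bT$, which is exactly the classical Gauss expression for $d\omega/dt$ and reproduces the $\bR,\bT$ line of the statement after substituting $r = p/(1+e\cos f)$, $v = \sqrt{\mu/p}\,e\sin f$ and $rw = \sqrt{\mu/p}\,(1+e\cos f)$. In particular this fixes $\omega_v = -\frac{\cos f}{e}\sqrt{p/\mu}$ and $\frac{\omega_w}{r} = \frac{\sin f}{e}\,\frac{2+e\cos f}{1+e\cos f}\,\sqrt{p/\mu}$. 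Since the martingale coefficients are these same two partial derivatives, the diffusion line of the statement follows automatically from the same substitutions, with no work beyond the classical Gauss coefficients.

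The genuinely new content, and the main obstacle, is the It\^o correction $\tfrac12(\omega_{vv}\tR^2 + \omega_{ww}\tT^2/r^2 + 2\omega_{vw}\tR\cdot\tT/r)$, which has no deterministic analogue. I would obtain the three second derivatives $\omega_{vv}$, $\omega_{ww}$, $\omega_{vw}$ by differentiating the first-order expressions above once more in $v$ and $w$, which forces me to compute the state derivatives $\partial_v f,\partial_w f,\partial_v e,\partial_w e,\partial_v a,\partial_w a$ from the relations defining $f$, $e$ and $a$, and then to re-express everything in the variables $(a,e,f)$. This is a long but purely mechanical trigonometric reduction; the likely pitfalls are sign bookkeeping and the repeated factors $1+e\cos f$ in the denominators. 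Matching the outcome to $\frac{a(1-e^2)}{\mu e^2}$ times $\frac{\sin 2f}{2}$, $-(e+\cos f(2+e\cos f)^2)\frac{\sin f}{(1+e\cos f)^2}$ and $\frac{2+e\cos f}{1+e\cos f}\cos 2f$ for the $\tR^2$, $\tT^2$ and $\tR\cdot\tT$ coefficients respectively then completes the proof.
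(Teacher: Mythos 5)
Your overall strategy---represent $\omega$ as a $\mathcal{C}^2$ function of the state and push the system through the multi-dimensional It\^o formula---is the same as the paper's, and your identification of the martingale part and of the first-order drift with the classical Gauss coefficients is correct (the unperturbed drift does vanish because $\omega$ is a first integral of the Kepler flow, and the quadratic covariations are as you state). The gap is that the proof stops exactly where the lemma begins: the second-order It\^o correction $\tfrac12\left(\omega_{vv}\tR^2+\omega_{ww}\tT^2/r^2+2\omega_{vw}\tR\cdot\tT/r\right)$, i.e.\ the three coefficients $\frac{\sin 2f}{2}$, $-\left(e+\cos f(2+e\cos f)^2\right)\frac{\sin f}{(1+e\cos f)^2}$ and $\left(\frac{2+e\cos f}{1+e\cos f}\right)\cos 2f$, is never computed; you only assert that a ``mechanical trigonometric reduction'' would produce it and that you would ``match the outcome'' to the claimed formula. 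Since everything else in the statement is the deterministic Gauss equation plus the evident diffusion term, that correction is the entire content of the lemma, and matching against the target is a check, not a derivation.

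Moreover, of your two candidate representations, the one you develop ($\omega=\phi-f$ with $f$ implicit in $e\cos f=p/r-1$, $e\sin f=v\sqrt{p/\mu}$) is the harder road: the second derivatives $\omega_{vv},\omega_{vw},\omega_{ww}$ require differentiating the implicit relations twice, dragging in $\partial_v e,\partial_w e,\partial_v a,\partial_w a$ and their derivatives. The paper takes your other suggestion and makes it effective: it sets $\omega=\arctan(A_y/A_x)$ with the Laplace--Runge--Lenz components $A_x,A_y$, which are polynomial in $(r,\phi,v,w)$, so $dA_x$ and $dA_y$ (with their diffusion coefficients $\tAx,\tAy$ and their own It\^o corrections) are immediate; the correction for the outer $\arctan$ is then the closed form $\left(\left(A_y^2-A_x^2\right)\tAx\cdot\tAy+A_xA_y\left(\tAx\cdot\tAx-\tAy\cdot\tAy\right)\right)/\left(A_x^2+A_y^2\right)^2\,dt$, and the identities $A_x^2+A_y^2=e^2m^2k^2$, $A_y^2-A_x^2=-e^2m^2k^2\cos 2\omega$, $A_xA_y=e^2m^2k^2\tfrac{\sin 2\omega}{2}$ collapse everything to the stated coefficients. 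To complete your argument you should either switch to that factorization or actually carry out and display the implicit differentiation you defer.
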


The proof is given in Section \ref{proof-pericenter}.

\subsection{Numerical results}

In this Section, we give numerical simulations comparing the behaviour of our stochastic Gauss equations and the numerical computation directly obtained from the trajectories of the stochastic two-body problem. In each case, one can see the very good agreement between the values obtained using our analytical formulas and the direct evaluations on a given solution. See Figures \ref{pierret:figa},\ref{pierret:fige} and \ref{pierret:figomega}. The energy, for which we have also derived an explicit formula is also well predicted. See Figure \ref{simul-energy}.

\begin{figure}
\centering

\begin{subfigure}[b]{0.45\textwidth}
 \centering
 \includegraphics[width=\textwidth,clip]{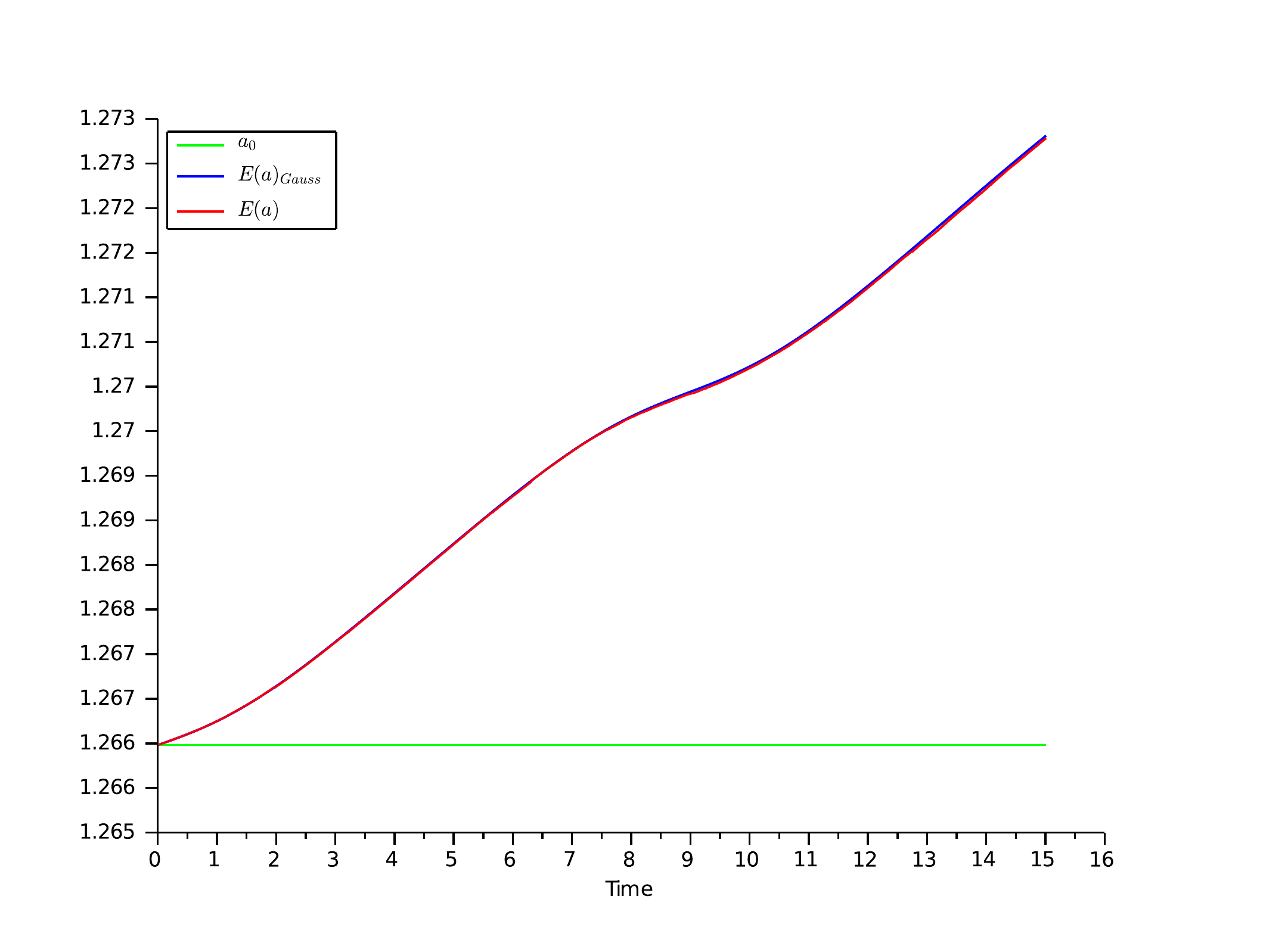}  
\caption{$a$ the semi-major axis}
\label{pierret:figa}
\end{subfigure}
~  
\begin{subfigure}[b]{0.45\textwidth}
 \centering
 \includegraphics[width=\textwidth,clip]{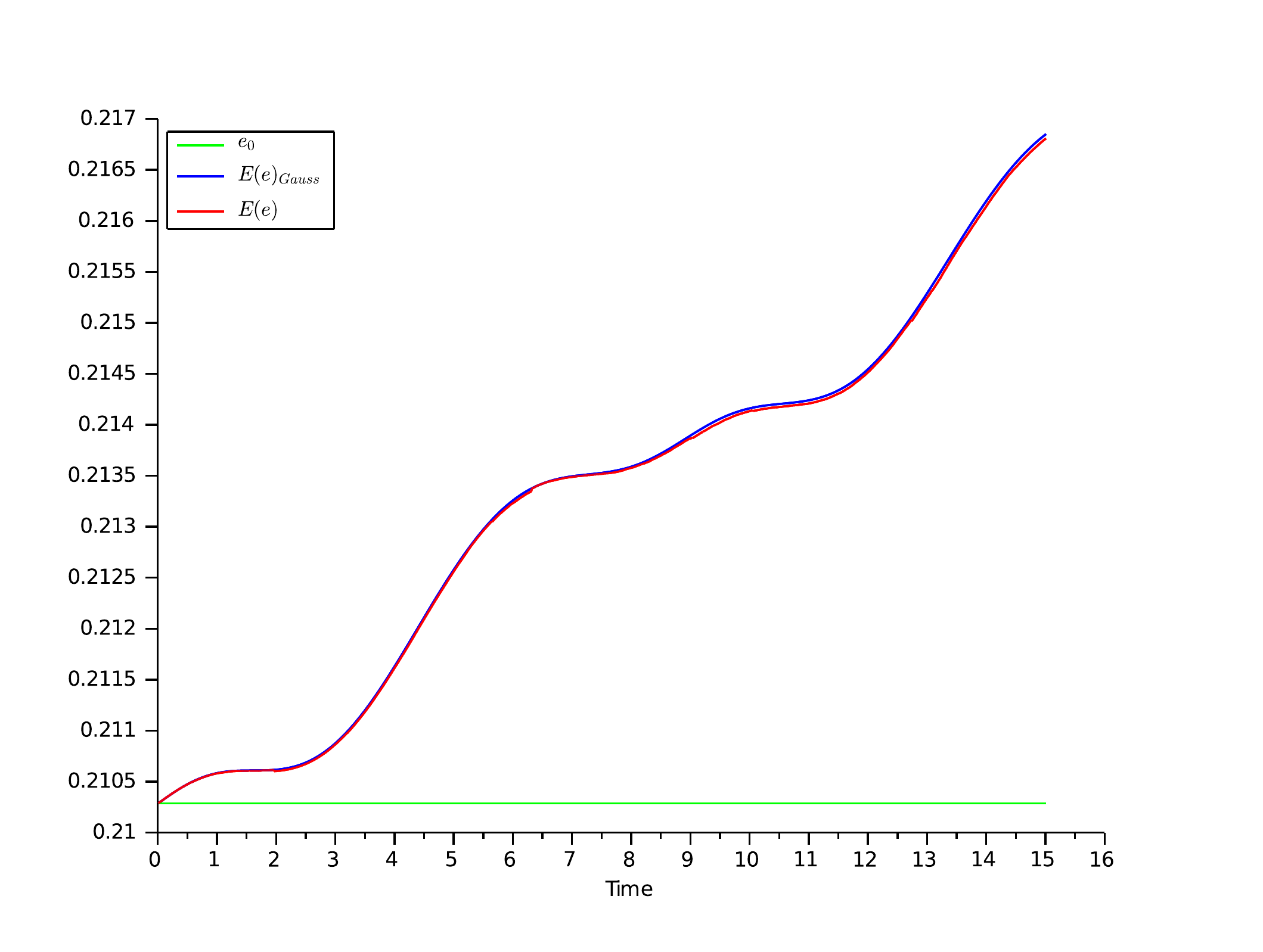}  
\caption{$e$ the eccentricity}
\label{pierret:fige}
\end{subfigure}    

\begin{subfigure}[b]{0.45\textwidth}
\centering
\includegraphics[width=\textwidth,clip]{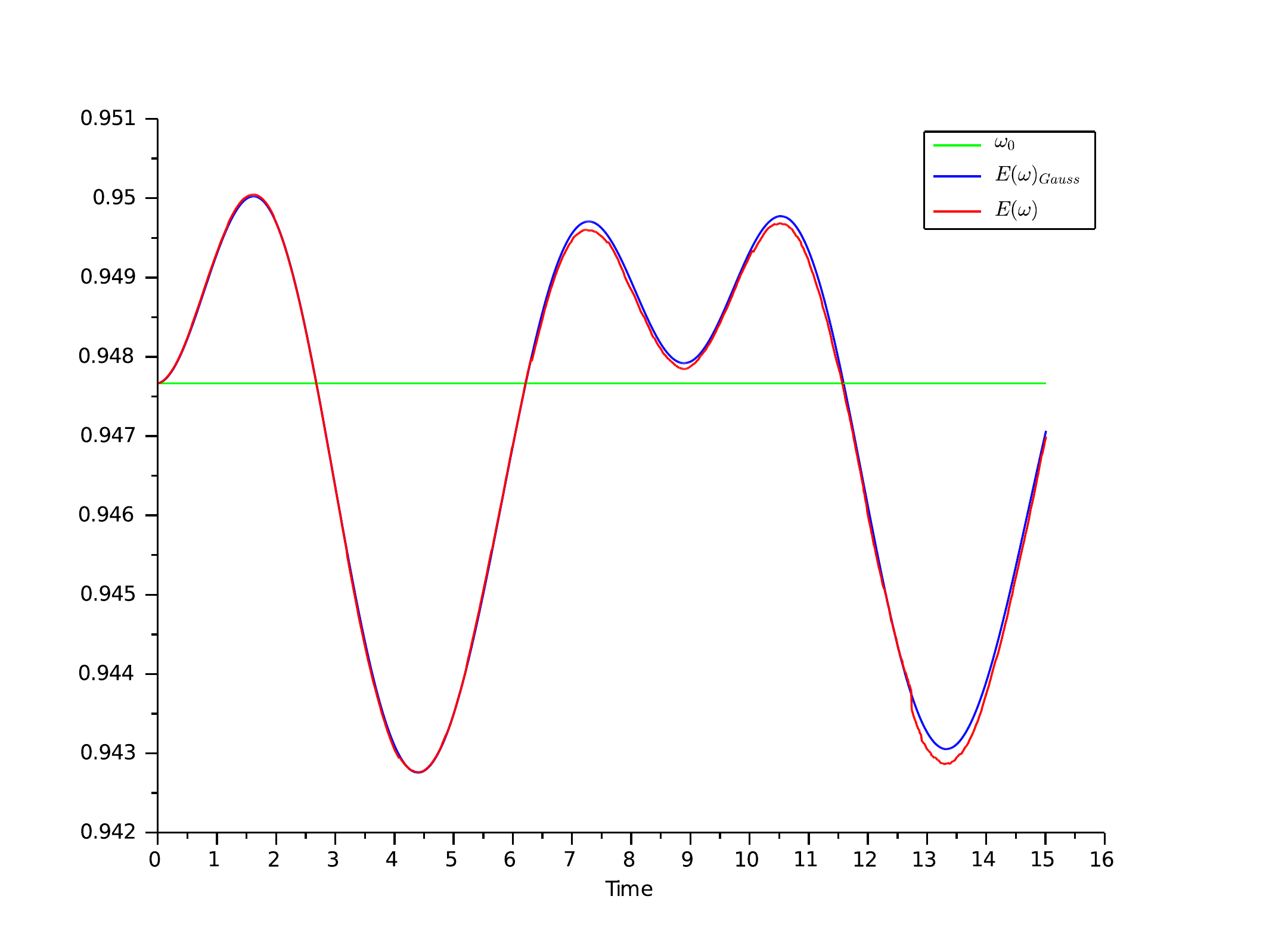}      
\caption{$\omega$ the pericenter angle. }
\label{pierret:figomega}
\end{subfigure}
~
\begin{subfigure}[b]{0.45\textwidth}
\centering
\includegraphics[width=\textwidth,clip]{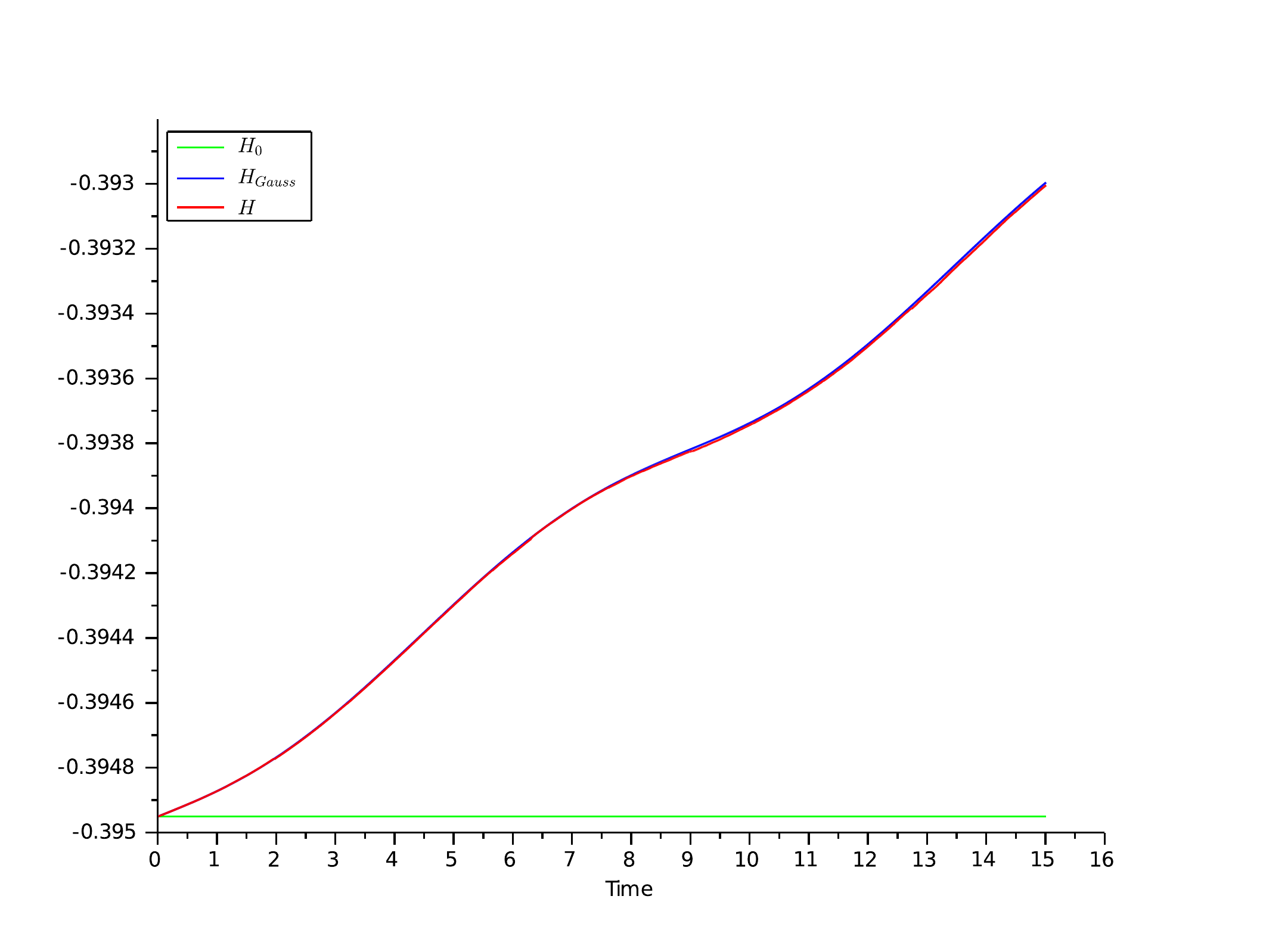}
\caption{$H$ the Energy}
\label{simul-energy}
\end{subfigure}

\end{figure}

\section{Conclusions}
\label{conclusion}
%%--------------------
The Sharma-Parthasarathy model displays a fast change of the dynamics with respect to the classical two-body problem despite the smallness of the stochastic perturbation. This result reinforces the necessity to take into account usually ignored stochastic phenomenon in order to obtain relevant predictions on the long term dynamical behaviour of dynamical systems.\\

As a consequence, the following list of open problems can be studied :
\begin{itemize}
\item Stochastic perturbations induced by the deformation of bodies. As a first step, we would like to study a $J_2$-problem (see \cite{j2}) with a random or stochastic $J_2$ constant and its influence on the rotation of the earth.

\item In order to perform simulations over a very long time, we need to construct high order stochastic Runge-Kutta type integrators.
\end{itemize}

% Optional acknowledgements
% -------------------------
%\begin{acknowledgements}
%
%\end{acknowledgements}

%%-----------------------------
%%   Bibliography
%%-----------------------------
%%

\begin{appendix}

\section{Preliminaries}

Let $B_t=(B_{t,1},\ \ldots,\ B_{t,m})^T$ denote $m$-dimensional Brownian motion. We can form the following $n$ It\^{o} processes

\begin{align}
\left\{\begin{array}{l}
dX_{1}=\bar{X}_{1}dt+\tilde{X}_{11}dB_{1}+\cdots+\tilde{X}_{1m}dB_{m}\\
\vdots  \\
dX_{n}=\bar{X}_{n}dt+\tilde{X}_{n1}dB_{1}+\cdots+\tilde{X}_{nm}dB_{m}
\end{array}\right. 
\end{align}

for $(1\leq i\leq n,\ 1\leq j\leq m)$. Or, in matrix notation simply

\begin{align}
dX(t)=\bar{X}dt+\tilde{X} \cdot dB_t , \label{nstocproc}
\end{align}

where

\begin{align*}
X(t)=\left(\begin{array}{c}
X_{1}(t)\\
\vdots \\
X_{n}(t)
\end{array}\right),
\ \bar{X}=\left(\begin{array}{c}
\bar{X}_{1}\\
\vdots \\
\bar{X}_{n}
\end{array}\right),
\ \tilde{X}=\left(\begin{array}{ccc}
\tilde{X}_{11} & \cdots & \tilde{X}_{1m}\\
\vdots  &  & \vdots \\
\tilde{X}_{n1} & \cdots & \tilde{X}_{nm}
\end{array}\right),
\ dB_t=\left(\begin{array}{c}
dB_{t,1}\\
\vdots \\
dB_{t,m}
\end{array}\right) .
\end{align*}

In what follow, we will always use the same notations as (\ref{nstocproc}) to describe a stochastic process and in order to simplify expressions we will omit the dependency.

\subsection{Energy and Angular momentum}

As it will be shown, we will need the variation of the angular momentum $M$ and the energy $H$ in order to compute the variation of the semi-major axis $a$, the eccentricity $e$ and the pericenter angle $\omega$.

We use the Itô formula to compute the variation of $M$ and $H$ with the perturbing radial acceleration (\ref{dv}) and the perturbing tangential acceleration (\ref{dw}) and we obtain 

\begin{align}
dM &= m r \bar{T} dt + m r \tilde{T}\cdot dB_t, \\
dH &= m \left( v\bar{R} +rw\bar{T}+\frac{\tilde{R}^2}{2}+\frac{\tilde{T}^2}{2}\right)dt + m \left(v\tilde{R} + rw \tilde{T}\right) \cdot dB_t.
\end{align}

In order to obtain the expression of $dM$ and $dH$ in term of orbital elements, we use the following formula which relate orbitals elements to $r,v$ and $w$ (see \cite{burns1976} Eq. (4)-(10)-(11)-(16))

\begin{align}
r&=\frac{a(1-e^2)}{1+e\cos f}, \label{rorb}\\
v&=\sqrt{\frac{\mu}{a(1-e^2)}}e\sin f, \label{vorb}\\
w&=\frac{\sqrt{\mu}}{a^{3/2}\sqrt{1-e^2}}(1+e\cos f)^2,\label{worb} \\
M&=\sqrt{mka(1-e^2)}\label{Morb}.
\end{align}

Finally using Eq. (\ref{rorb})-(\ref{vorb})-(\ref{Morb}) we get

\begin{align}
dM &= m\frac{a(1-e^2)}{1+e\cos f} \bar{T} dt + m\frac{a(1-e^2)}{1+e\cos f} \tilde{T}\cdot dB_t, \label{dH} \\
dH &= m\left[\sqrt{\frac{\mu}{a(1-e^2)}} \left(e\sin f \bar{R} +(1+e\cos f)\bar{T}\right)+\frac{\tilde{R}^2+\tilde{T}^2}{2}\right]dt \\ 
 	&+ m\sqrt{\frac{\mu}{a(1-e^2)}} \left( e\sin f \tilde{R} +(1+e\cos f)\tilde{T} \right) \cdot dB_t \nonumber. \label{dE}
\end{align}

In what follow we will always use Eq. (\ref{rorb})-(\ref{vorb})-(\ref{Morb}) to simplify terms.

\section{Proof of Lemma \ref{lemme-major-axis}}
\label{proof-major-axis}

The semi major axis is related to the energy by (see \cite{burns1976} Eq. (17))
\begin{align}
a=-\frac{k}{2H}.
\end{align}
Differentiating this equations yields to
\begin{align}
da = \frac{k}{2 H^2}dH - \frac{k}{2 H^3} \tilde{H}\cdot\tilde{H} dt,
\end{align}

Using the expression of the variation of the energy $H$ we get the expression of $da$.

\section{Proof of Lemma \ref{lemme-eccentricity}}
\label{proof-eccentricity}

The proof use the relation between the angular momentum $M$ and the energy $H$ given by (see \cite{burns1976} Eq. (18))

\begin{align}
	e=\sqrt{1+\frac{2M^2H}{mk^2}}
\end{align}

Differentiating this equation yields

\begin{align*}
de=\frac{2MH}{emk^2}dM+\frac{M^2}{emk^2}dH + \left(\frac{H}{e^3mk^2}\tilde{M}\cdot\tilde{M} - \frac{M^4}{2e^3m^2k^4}\tilde{H}\cdot\tilde{H} + \frac{2M(M^2H+mk^2)}{e^3m^2k^4}\tilde{M}\cdot\tilde{H}\right)dt.
\end{align*}

Firstly, notice that

\begin{align*}
1+e\cos f-\frac{(1-e^2)}{1+e\cos f}=e(\cos f + \frac{e+\cos f}{1+e\cos f})
\end{align*}

then

\begin{align*}
\frac{2MH}{emk^2}dM+\frac{M^2}{emk^2}dH = &\left[ \sqrt{\frac{a(1-e^2)}{\mu}}\left(\sin f \bar{R} + (\cos f + \frac{e+\cos f}{1+e\cos f})\bar{T} \right) + \frac{a(1-e^2)}{2e\mu} \left(\tR^2+\tT^2\right)\right]dt \\
& +\sqrt{\frac{a(1-e^2)}{\mu}}\left(\sin f \tR + (\cos f + \frac{e+\cos f}{1+e\cos f})\tT \right)\cdot dB_t .
\end{align*}

Secondly using the expression of $dM$ and $dH$ we have 

\begin{align*}
\tilde{M}\cdot\tilde{M} &= m^2\frac{a^2(1-e^2)^2}{(1+e\cos f)^2}\tT^2, \\
\tilde{H}\cdot\tilde{H} &= m^2\frac{\mu e^2\sin^2 f}{a(1-e^2)}\tR^2 + m^2\frac{\mu(1+e\cos f)^2}{a(1-e^2)}\tT^2+m^2\frac{2e\mu\sin f (1+e\cos f)}{a(1-e^2)}\tR\cdot\tT,\\
\tilde{M}\cdot\tilde{H} &= m^2\sqrt{\mu a(1-e^2)}\tT^2+m^2\frac{e\sin f \sqrt{\mu a(1-e^2)}}{1+e\cos f}\tR\cdot\tT.
\end{align*}

Finally after some simplifications we obtain the result.

\section{Proof of Lemma \ref{lemme-pericenter}}
\label{proof-pericenter}

The pericenter angle is defined by the relation (see \cite{goldstein} p.102-105)
\begin{align}
\tan \omega=\frac{A_y}{A_x}
\end{align}
or equivalently by
\begin{align}
\omega=\arctan \frac{A_y}{A_x} \label{omega}.
\end{align}
By definition of the Laplace-Runge-Lenz vector (Eq. \ref{runge-lenz}), the expression of his components are
$$
\left .
\begin{array}{lll}
A_x & = & m \left(\cos \phi \left(m r^3 w^2-k\right)+m r^2 w v\sin \phi\right), \\
A_y & = &m \left(\sin \phi \left(m r^3 w^2-k\right)-m r^2 w v\cos \phi\right).
\end{array}
\right .
$$
Using the Itô formula, the variations of $A_x$ and $A_y$ are given by
$$
\left .
\begin{array}{lll}
dA_x & = & \bigg[m^2 (r^2 w \sin \phi \bR + (2 r^2 w \cos \phi+rv \sin \phi)\bT ) + m^2 r \tilde{T}^2 \cos \phi +m^2r\sin \phi \tR\cdot\tT \bigg]dt \\
 & & + m^2 (r^2 w \sin \phi \tR + (2 r^2 w \cos \phi+rv \sin \phi)\tT )\cdot dB_t , \\
dA_y & = & \bigg[m^2 (-r^2 w \cos \phi\bR +(2 r^2 w \sin \phi-rv \cos \phi)\bT) +  m^2 r \tilde{T}^2 \sin \phi -m^2r\cos \phi \tR\cdot\tT \bigg]dt \nonumber \\
& & +m^2 (- r^2 w \cos \phi\tR +(2 r^2 w \sin \phi-rv \cos \phi)\tT)\cdot dB_t ,
\end{array}
\right .
$$
with
$$
\left .
\begin{array}{lll}
\tAx & = & m^2 (r^2 w \sin \phi \tR + (2 r^2 w \cos \phi+rv \sin \phi)\tT ) , \\
\tAy & = & m^2 (-r^2 w \cos \phi \tR + (2 r^2 w \sin \phi-rv \cos \phi)\tT ) .
\end{array}
\right .
$$
We now use the Itô formula on Eq.(\ref{omega}) which defines $\omega$ and we get
$$
d\omega = \frac{A_x dA_y-A_y dA_x}{A_x^2+A_y^2} + \left(\frac{\left(A_y^2-A_x^2\right) \tAx\cdot \tAy+A_xA_y\left(\tAx\cdot \tAx-\tAy\cdot \tAy\right)}{\left(A_x^2+A_y^2\right)^2}\right)dt.
$$
Firstly we detail some terms :
$$
\left .
\begin{array}{lll}
A_x^2+A_y^2  & = & e^2m^2k^2, \\
A_y^2-A_x^2 & = & -e^2m^2k^2\cos{2\omega}, \\
A_x A_y & = &  e^2m^2k^2\frac{\sin{2\omega}}{2}.
\end{array}
\right .
$$
So we have :
$$
\frac{A_x dA_y-A_y dA_x}{A_x^2+A_y^2} = \frac{mr\sin f \tT^2}{ek}dt+\frac{m}{ek}\left(r^2w\sin f \tR + (2r^2w\sin f -rv\cos f)\tT - r\cos f \tR\cdot\tT \right)\cdot dB_t .
$$
Secondly we have
$$
\left .
\begin{array}{lll}
\tAx \cdot \tAx & = & m^4 \left [ r^4 w^2 \sin^2 \phi \tR^2 + (2 r^2 w \cos \phi+rv \sin \phi)^2 \tT^2 \right . \\
& & \left . +2r^2w\sin \phi\left( 2r^2w\cos \phi +rv\sin \phi \right)\tR\cdot\tT \right ], \\
\tAy \cdot \tAy & = & m^4 \left [ r^4 w^2 \cos^2 \phi \tR^2 + (2 r^2 w \sin \phi-rv \cos \phi)^2\tT^2 \right . \\
& & \left . -2r^2w\cos \phi\left( 2r^2w\sin \phi -rv\cos \phi \right)\tR\cdot\tT \right ] , \\
\tAx \cdot \tAy & = & m^4 \bigg[ -\frac{r^4w^2\sin 2\phi}{2}\tR^2 + (2r^4w^2\cos\phi+rv\sin\phi)(2r^4w^2\sin\phi-rv\cos\phi)\tT^2 \\
 & & -r^2w\left(2r^2w\cos 2\phi +rv\sin 2\phi \right)\tR\cdot\tT\bigg], \\
\tAx^2-\tAy^2 & = & m^4\bigg[-r^4w^2\cos{2\phi}\tR^2 + \left( (2 r^2 w \cos \phi+rv \sin \phi)^2 - (2 r^2 w \sin \phi-rv \cos \phi)^2 \right)\tT^2 \\
 & & +2r^2w\left(2r^2w\sin 2\phi -rv\cos 2\phi \right)\tR\cdot\tT\bigg].
\end{array}
\right .
$$
We now detail the last term of $d\omega$,
$$
\left(A_y^2-A_x^2\right) \tAx\cdot \tAy+A_xA_y\left(\tAx\cdot \tAx-\tAy\cdot \tAy\right).
$$
In this term there are only factors of $\tR^2$, $\tT^2$ and $\tR\cdot\tT$. As a consequence, using the previous expressions the factor of $\tR^2$ is given by
$$
m^6e^2k^2r^4w^2\frac{\sin{2f}}{2},
$$
the factor of $\tT^2$ by
$$
m^6e^2k^2\left(2r^3wv\cos{2f}-(4r^4w^2-r^2v^2)\frac{\sin{2f}}{2}\right)
$$
and the factor of $\tR\cdot\tT$ by
$$
m^6e^2k^2\left(2r^4w^2\cos 2f + r^3wv\sin 2f \right).
$$
This leads to the following expression of $d\omega$ :
$$
\left .
\begin{array}{lll}
	d\omega & = & \bigg[ \sqrt{\frac{a(1-e^2)}{\mu}}\left( -\frac{\cos f}{e} \bR + \frac{\sin f}{e} \left(\frac{2+e\cos f}{1+e\cos f} \right)\bT\right)  \\
	& & + \frac{a(1-e^2)}{\mu e^2}\bigg( \frac{\sin 2f}{2} \tR^2 + \left(\frac{2e\sin f}{1+e\cos f}\cos 2f - \left(2-\frac{e^2\sin^2 f}{2(1+e\cos f)^2} \right) \sin 2f + \frac{e\sin f}{1+e\cos f}\right) \tT^2  \\
	&  & + \left(\frac{2+e\cos f}{1+e\cos f}\right)\cos 2f \tR\cdot\tT \bigg) \bigg] dt \nonumber \\
	& & +\sqrt{\frac{a(1-e^2)}{\mu}}\left( -\frac{\cos f}{e} \tR + \frac{\sin f}{e} \left(\frac{2+e\cos f}{1+e\cos f} \right)\tT \right) \cdot dB_t .\nonumber
\end{array}
\right .
$$
We can also simplify the term of $\tT^2$ as follows :
$$
\di\frac{2e\sin f}{1+e\cos f}\cos 2f - \left(2-\frac{e^2\sin^2 f}{2(1+e\cos f)^2} \right) \sin 2f = -\left(1+\frac{1}{1+e \cos f}\right) \left(\cos f+\frac{e+\cos f}{1+e \cos f}\right) \sin f 
$$
which gives
$$
-\left(1+\frac{1}{1+e \cos f}\right) \left(\cos f+\frac{e+\cos f}{1+e \cos f}\right) \sin f + \frac{e\sin f}{1+e\cos f} = -\frac{\left(e+\cos f (2+e \cos f)^2\right) \sin f}{(1+e \cos f)^2} .
$$
Finally, we obtain
$$
\left .
\begin{array}{lll}
	d\omega & = & \bigg[ \sqrt{\frac{a(1-e^2)}{\mu}}\left( -\frac{\cos f}{e} \bR + \frac{\sin f}{e} \left(\frac{2+e\cos f}{1+e\cos f} \right)\bT\right)  \\
	& & + \frac{a(1-e^2)}{\mu e^2}\bigg( \frac{\sin 2f}{2} \tR^2 - \left(e+\cos f(2+e\cos f)^2 \right)\frac{\sin f}{(1+e\cos f)^2} \tT^2 + \left(\frac{2+e\cos f}{1+e\cos f}\right)\cos 2f \tR\cdot\tT \bigg) \bigg] dt  \nonumber \\ 
	& & +\sqrt{\frac{a(1-e^2)}{\mu}}\left( -\frac{\cos f}{e} \tR + \frac{\sin f}{e} \left(\frac{2+e\cos f}{1+e\cos f} \right)\tT \right) \cdot dB_t ,
\end{array}
\right .
$$
which concludes the proof.

\end{appendix}

\end{document}